\theoremstyle{definition}
\newtheorem{theorem}{Theorem}
\newtheorem{lemma}{Lemma}
\newcolumntype{C}{>{\centering\arraybackslash}X}
\begin{document}

\title{Optimizing Linear Correctors: A Tight Output Min-Entropy Bound and Selection Technique}

\author{Milo\v{s}~Gruji\'{c}
        and~Ingrid~Verbauwhede\thanks{The authors are with COSIC, KU Leuven, Kasteelpark Arenberg 10, 3001 Leuven -- Heverlee, Belgium (e-mail: milos.grujic@esat.kuleuven.be; ingrid.verbauwhede@esat.kuleuven.be).}\thanks{This work was partially supported by CyberSecurity Research Flanders with reference number VR20192203, the European Commission through Horizon 2020 research and innovation program under Belfort ERC Advanced Grant 101020005 and Twinning Grant SAFEST 952252, through Horizon Europe program under grant agreement No.101114043 (QSNP) and through Digital Europe Program together with the Belgian Federal Science Policy Office (Belspo) through the Federal restart and transition plan under grant agreement No.101091625 (BE-QCI).}
}

\maketitle

\begin{abstract}
Post-processing of the raw bits produced by a true random number generator (TRNG) is always necessary when the entropy per bit is insufficient for security applications.
In this paper, we derive a tight bound on the output min-entropy of the algorithmic post-processing module based on linear codes, known as linear correctors.
Our bound is based on the codes' weight distributions, and we prove that it holds even for the real-world noise sources that produce independent but not identically distributed bits.
Additionally, we present a method for identifying the optimal linear corrector for a given input min-entropy rate that maximizes the throughput of the post-processed bits while simultaneously achieving the needed security level. 
Our findings show that for an output min-entropy rate of $\mathbf{0.999}$, the extraction efficiency of the linear correctors with the new bound can be up to $\mathbf{130.56\, \%}$ higher when compared to the old bound, with an average improvement of $\mathbf{41.2\, \%}$ over the entire input min-entropy range.
On the other hand, the required min-entropy of the raw bits for the individual correctors can be reduced by up to $\mathbf{61.62\, \%}$.

\end{abstract}

\begin{IEEEkeywords}
Entropy, true random number generator, post-processing, linear correctors.
\end{IEEEkeywords}

\section{Introduction}
\label{Sec:Introduction}

\IEEEPARstart{R}{andom} numbers produced directly by a noise source of a true random number generator (TRNG) -- raw random numbers, are rarely ideal. 
In order to be considered ideal and possess full entropy, random numbers should be independent, identically and uniformly distributed.
However, raw random numbers often display dependencies, biases, and a lack of identical distribution.
Therefore, before using them for critical security and cryptographic applications, these numbers should be subjugated to entropy extraction (post-processing) to increase the entropy content per random bit to an acceptable level.
An important figure-of-merit of the post-processing algorithms is the extraction efficiency, which represents the ratio of the output to the input entropy.
According to the US standard for TRNGs, referred to as entropy sources in the standard, NIST SP 800-90B \cite{Turan2018}, the raw random numbers can be post-processed (conditioned) by either using one of the six vetted conditioning algorithms or by using custom algorithms with appropriate entropy estimation.
On the other hand, German AIS-31 \cite{AIS31, AIS31New}, which has emerged as the leading TRNG standard and evaluation methodology within the European Union \cite{Balasch18}, categorizes post-processing methods into two main types: cryptographic and algorithmic post-processing.

While the main role of cryptographic post-processing is to ensure computational security \cite{AIS31, AIS31New}, it is also used to increase the entropy rate (entropy per bit) of the random numbers.
To achieve this enhancement, it is crucial for the cryptographic post-processing to be compressive. 
The well-understood and widely used cryptographic hash functions and block ciphers, as building blocks of one-way compression functions, can be used for this purpose.
The security and entropy of the output from the cryptographic post-processing can be derived by modeling it as a random mapping, as discussed in \cite{AIS31, AIS31New}.
Since the random mapping behavior is a theoretical idealization, the entropy estimation of the output relies on the computational security of the used underlying cryptographic primitive.
Cryptographic post-processing is not tailored to any specific distribution family of the raw random numbers. 
It can often be attractive from a practical perspective in security systems that already have software or dedicated hardware implementations of cryptographic primitives. 
However, using cryptographic primitives for the sole purpose of post-processing can also be prohibitively expensive.
Most noise sources produce raw numbers at rates significantly lower than the operating frequencies of modern CPUs \cite{Yang2018, Petura2016, Ma19, Johnston18}. Consequently, the cryptographic post-processing tasks would require a considerable amount of processing time due to the resulting latency.
In digital platforms with dedicated cryptographic accelerators, all non-TRNG applications that require their use would be precluded from employing them during the post-processing.
Further, performing cryptographic operations can be power- or energy-expensive, thereby increasing the overall cost of randomness.

Algorithmic post-processing entails using straightforward and lightweight functions often adapted to the stochastic model of the noise source and the family of raw bit distributions \cite{AIS31, AIS31New}.
Unlike cryptographic post-processing, the algorithmic methods provide information-theoretical security and the output entropy can often be precisely determined.
This post-processing is inherently future-proof when used appropriately, as new and improved cryptanalytic techniques cannot compromise its security. 
For the noise sources that produce independent and identically distributed (IID) bits, the well-known Von Neumann unbiasing \cite{VonNeumann1951} can be used as algorithmic post-processing to obtain the full entropy output.
While Von Neumann's procedure's maximum extraction efficiency of only 0.25 can be increased by its generalizations -- Peres' \cite{Peres1992} and Elias' \cite{Elias1972} unbiasing methods, this comes at a much greater computational cost.
Additional practical disadvantages of these constructions are their variable output rate and the strict IID requirement, which might be impossible to achieve with real-world TRNGs.
Another commonly used algorithmic post-processing method is the simple XOR function of $n$ consecutive bits, which reduces the bias of independent but not necessarily identically distributed raw bits at the cost of $n$-fold throughput reduction \cite{Davies2002}.
While this post-processing can never achieve full entropy of the output bits, it can increase the entropy rate to the desired amount, has a fixed output rate and very low implementation costs.

In \cite{Dichtl2007}, Dichtl proposed several XOR-based post-processing constructions for IID bits with higher extraction efficiency than the basic XOR function due to the reuse of input bits.
These constructions were later formalized as \textit{linear correctors} by Lacharme in \cite{Lacharme2008,Lacharme2009}, who also gave a lower bound on the min-entropy of their output.
Linear correctors are represented by the mappings of the form:
\begin{equation}
	\boldsymbol{Y}^{k \times 1} = \boldsymbol{G}^{k \times n} \boldsymbol{X}^{n \times 1},
\end{equation}
where $\boldsymbol{X}^{n \times 1}$ and $\boldsymbol{Y}^{k \times 1}$ are column vectors of $n$ input and $k$ output bits, respectively, $\boldsymbol{G}^{k \times n}$ is a generator matrix of a binary linear code with minimum distance $d$ and multiplication is performed in the Galois field of size 2. 
If all input bits have bias $\delta$, then the lower bound on the min-entropy of the output of the linear corrector can be derived as \cite{Lacharme2008}:
\begin{equation}\label{eq:Lacharme_ineq}
	\mathrm{H}^{out,\, tot}_{\infty} \geq k - \log_{2}\left(1+\delta^{d}2^{k+d}\right).
\end{equation}
In subsequent works \cite{Tomasi2017, Grujic2022TROT}, it was shown that the linear correctors could also be used on the independent raw bits that are not identically distributed.
A slightly modified version of Lacharme's bound, which includes a lower bound on min-entropy of independent raw bits $\mathrm{H}^{in}_{\infty}$, was given in \cite{Grujic2022TROT}:
\begin{equation}\label{eq:Lacharme_ineq_mod}
	\mathrm{H}^{out,\, tot}_{\infty} \geq k -  \log_{2} \left(1 + \left(2^{1-\mathrm{H}^{in}_{\infty}}-1\right)^d \cdot 2^k\right).
\end{equation}
Linear correctors are recognized by the RISC-V consortium \cite{Zeh2021, Saarinen2022} as a form of admissible non-cryptographic post-processing and are recommended to be used in several recent TRNG designs \cite{Ugajin17, Ali19, Park20, Lyp21, Massari2022, Grujic2022TROT}. 
They represent an attractive post-processing method due to a significantly smaller hardware footprint compared to cryptographic post-processing \cite{Kwok2011}, the ability to deal with not identically distributed raw random bits and higher extraction efficiency than simple XOR function \cite{Dichtl2007, Lacharme2008}.
Refining the output min-entropy bound of the corrector can prevent the unnecessary dissipation of entropy from raw bits during the post-processing stage, thereby enhancing the performance of TRNG designs that incorporate linear correctors.

\subsection{Our Contributions}\label{Subsec:Contributions}

In this work, we noticeably improve Lacharme's previously established min-entropy bound of the linear corrector's output.
The improvement is achieved by first establishing new relations between the probabilities of a linear code and its cosets.
These relations are then used to gain new insights into the connection between the weight distribution of a binary linear code and the linear corrector's output probabilities.
We show that our new bound is also suitable for TRNGs whose noise sources produce independent and non-identically distributed raw bits.
To demonstrate the applicability of this newly established result, we devise an optimization procedure to select linear correctors that achieve the best trade-off between the necessary input min-entropy rate and the throughput reduction to obtain the desired output min-entropy rate.
We leverage the existing knowledge of the best known linear codes and known weight distributions to find the optimal performing linear correctors.
Our newly introduced bound enables us to find linear correctors that are up to $130.56 \%$ more efficient in entropy extraction compared to those derived from the previous bound for an equivalent input min-entropy. 
Across the entire examined input min-entropy range, the new bound averages an enhancement in extraction efficiency by $41.2 \%$.
We have made the list of optimal performing correctors according to the new bound available at \cite{OurGithub}, along with the weight distributions of their corresponding codes and the input min-entropies required to use them. 
This resource is intended to help TRNG designers in selecting appropriate post-processing techniques and to facilitate the reproduction of our work.

 \section{Preliminaries}
\label{Sec:Preliminaries}

In this section, we introduce notation, basic definitions and necessary background in coding theory. For a more in-depth treatment of the coding theory fundamentals, we recommend referring to \cite{MacWilliams1977} and \cite{Lin2004} along with their respective references.

\subsection{Notations and Definitions}\label{Subsec:Preliminaries_notations}

We denote binary vectors with bold lowercase italic letters and matrices with bold uppercase italic letters.
Calligraphic uppercase letters represent random variables, while
the uppercase italic letters are reserved for denoting sets.
The $i$-th bit from the left of an $n$-bit vector $\boldsymbol{x}$ is denoted as $\boldsymbol{x}\left[i\right]$ and is referred to as the $i$ coordinate of $\boldsymbol{x}$.
The Hamming weight of a binary vector $\boldsymbol{x}$ is the number of coordinates of $\boldsymbol{x}$ equal to 1 and we denote it by HW$\left(\boldsymbol{x}\right)$. We use $\boldsymbol{1_{l_0}}$ to denote a bit vector characterized by having a value of 1 exclusively at the $l_0$ coordinate and zeros elsewhere.
The probability of an event is denoted with $\mathds{P}\left[\cdot\right]$.
Let $S$ be some set of $n$-bit vectors $\boldsymbol{x}$, which are realizations of an $n$-bit discrete random variable $\mathcal{X}$ with independent coordinates.
The probability of set $S$ is then defined as the sum of the occurrence probabilities of its element vectors, i.e., 
\begin{equation}
	\mathds{P}\left[S\right] = \sum_{\boldsymbol{x} \in S} \prod_{i=0}^{n-1} \left(\boldsymbol{x}\left[i\right] p_{i} + \left(1 - \boldsymbol{x}\left[i\right]\right) \left(1 - p_{i}\right) \right),
\end{equation}
where $p_i = \mathds{P}\left[\mathcal{X}\left[i\right] = 1\right]$, $0 \leq i \leq n-1$, and $p_i$ is called the 1-probability of bit in coordinate $i$.
$\mathcal{X}$ is an independent and identically distributed (IID) random variable (source) only when $p_i$ is identical for all $n$ bits of $\mathcal{X}$.
In this work, we use min-entropy as a post-processing performance measure, as it is the most conservative uncertainty quantity and is used by both NIST SP 800-90B \cite{Turan2018} and the latest version of AIS-31 standards\cite{AIS31New}.
The min-entropy of a discrete random variable $\mathcal{R}$, with the outcomes from the set $R$, is defined as 
\begin{equation}
	\mathrm{H}_{\infty} = - \log_{2}\left(\underset{r \in R}{\text{max}}\, \mathds{P}\left[\mathcal{R} = r\right]\right).
\end{equation}
In this work, we formally define the extraction efficiency of the post-processing algorithm as
\begin{equation}\label{eq:extraction_efficiency}
    \eta = \frac{\mathrm{H}^{out,\, tot}_{\infty}}{n\,\mathrm{H}^{in}_{\infty}},
\end{equation}
where $\mathrm{H}^{out,\, tot}_{\infty}$ is the total entropy at the output, $n$ is the number of input raw bits and $\mathrm{H}^{in}_{\infty}$ is the lower bound on the min-entropy rate of the raw bits.
We also define post-processing throughput reduction as the ratio of the number of input bits versus the number of output bits.

\subsection{Coding Theory}\label{Subsec:Preliminaries_coding_theory}

A binary linear code $C_0$ of length $n$ and dimension $k$ is a $k$-dimensional subspace of the vector space $\mathds{F}_2^{n}$. 
Hence, $C_0$ is a set of order $2^k$ of $n$-bit row vectors called codewords that form a group under the operation of bitwise modulo 2 addition ($\oplus$).
A minimum distance of a binary linear code is the smallest Hamming weight of the non-zero codewords.
A binary linear code $C_0$ of length $n$, dimension $k$ and minimum distance $d$ is called $\left[n,k,d\right]$-code or just $\left[n,k\right]$-code when properties of a code can be generalized independently of $d$.
Quantity $\nicefrac{k}{n}$ is called the code rate.

\noindent \textit{Example:} Consider a $\left[3,2\right]$-code. 
Here, $n=3$ and $k=2$. 
A potential code could be $C^{A}_0 = \left\{{000}, {110}, {101}, {011}\right\}$, which forms a 2-dimensional subspace in $\mathds{F}_2^{3}$. 
The minimum distance of this code is 2, as that is the smallest Hamming weight among the non-zero codewords ${110}$, ${101}$, and ${011}$.
Another potential $\left[3,2\right]$-code could be $C^{B}_0 = \left\{{000}, {110}, {100}, {010}\right\}$.
The minimum distance of this code is 1, as that is the smallest Hamming weight among the non-zero codewords ${110}$, ${100}$, and ${010}$.

The list of non-negative integers $\left(\mathfrak{A}_i\right)_{i=0}^n$, where $\mathfrak{A}_i$ is the number of codewords of Hamming weight $i$ in a $\left[n,k\right]$-code $C_0$, is called the weight distribution of the code. 

\noindent \textit{Example:} For the $\left[3,2\right]$-code $C^{A}_0$ provided earlier, the weight distribution is $\mathfrak{A}_0 = 1$, $\mathfrak{A}_1 = 0$, $\mathfrak{A}_2 = 3$ and $\mathfrak{A}_3 = 0$ since there is one codeword of weight 0, zero codewords of weight 1, and three codewords of weight 2.

For any binary linear code and for any given coordinate, either all codewords have a 0 at that coordinate or exactly half of them \cite{Lin2004}.
A generator matrix $\boldsymbol{G}$ of an $\left[n,k\right]$-code $C_0$ is a binary $k \times n$ full rank matrix whose rows are $k$ linearly independent codewords of $C_0$.

\noindent \textit{Example:} Let us consider our $\left[3,2\right]$-code $C^{A}_0$ again. 
When we look at the first coordinate, two codewords have a 1 (${110}, {101}$) and the other two have a 0 (${000}, {011}$).
A possible generator matrix $\boldsymbol{G}$ for this code could be:
$\left(\begin{smallmatrix} 1 & 0 & 1 \\0 & 1 & 1 \\\end{smallmatrix}\right)$.
This matrix represents two linearly independent codewords from $C^{A}_0$.
If we consider $\left[3,2\right]$-code $C^{B}_0$ and look at the third coordinate, we see that all codewords have a 0 at this coordinate.

A full rank $\left(n-k\right) \times n$ binary matrix $\boldsymbol{H}$ such that for all codewords $\boldsymbol{c}$ of an $\left[n,k\right]$-code $C_0$ it holds $\boldsymbol{H}\boldsymbol{c^{\intercal}} = \boldsymbol{0}$ is called a parity-check matrix of $C_0$.
For any $n$-bit vector $\boldsymbol{x}$, the parity-check matrix determines the syndrome of $\boldsymbol{x}$ as $\boldsymbol{s} = \boldsymbol{H} \boldsymbol{x}^{\intercal}$.
A binary linear $\left[n,n-k\right]$-code $C^{\bot}_0$ whose generator matrix is the parity-check matrix of $C_0$ is called the dual code of $C_0$.
$C^{\bot}_0$ is the null space of $C_0$, i.e., for any codeword $\boldsymbol{c}$ of $C_0$ and any codeword $\boldsymbol{c^{\bot}}$ of its dual code $C^{\bot}_0$ it holds 
$\sum_{i=0}^{n-1}\boldsymbol{c}[i]\boldsymbol{c^{\bot}}[i] = 0$, where additions and multiplications are in $\mathds{F}_2$.
The weight distribution of the dual code $\left(\mathfrak{A}^{\bot}_i\right)_{i=0}^n$ is called the dual weight distribution and it is related to the weight distribution $\left(\mathfrak{A}_i\right)_{i=0}^n$ of the $C_0$ code by the MacWilliams identity \cite{Lin2004},\cite{Macwilliams1963theorem}:
\begin{equation}\label{eq:macwilliams_id}
	\sum_{i=0}^{n}\mathfrak{A}_i z^{i} = 2^{k-n} \left(1+z\right)^{n} \sum_{i=0}^{n}\mathfrak{A}^{\bot}_i \left(\frac{1-z}{1+z}\right)^{i}.
\end{equation}

\noindent \textit{Example:} Assuming the generator matrix $\boldsymbol{G}$ mentioned above, a parity-check matrix $\boldsymbol{H}$ for our $\left[3,2\right]$-code $C^{A}_0$ is:
$\left(\begin{smallmatrix}
	1 & 1 & 1 \\
\end{smallmatrix}\right)
$.
This matrix ensures that for all codewords $\boldsymbol{c}$ in $C^{A}_0$, $\boldsymbol{H}\boldsymbol{c^{\intercal}} = \boldsymbol{0}$.
Matrix $\boldsymbol{H}$ is at the same time generator matrix of the dual $\left[3,3-2\right]$ code $C^{A,\bot}_0 = \left\{{000}, {111}\right\}$ with weight distribution $\mathfrak{A}^{\bot}_0 = 1$, $\mathfrak{A}^{\bot}_1 = 0$, $\mathfrak{A}^{\bot}_2 = 0$ and $\mathfrak{A}^{\bot}_3 = 1$.

For a binary linear $\left[n,k\right]$-code $C_0$ and an $n$-bit vector $\boldsymbol{a}$, the set $\left\{\boldsymbol{a} \oplus \boldsymbol{c} \mid \boldsymbol{c} \in C_0 \right\}$ is called a coset of $C_0$.
Two $n$-bit vectors are in the same coset if and only if they have an identical syndrome.
Hence, a syndrome uniquely determines a coset. 
A coset leader is the element with the smallest Hamming weight in its coset. 
If there are multiple elements with the same minimal Hamming weight, any of them can be selected to be the coset leader.
We will also sometimes refer to the set of codewords $C_0$ as a coset, with the all-zero vector being its unique coset leader.
The total number of cosets of an $\left[n, k\right]$-code is $2^{n-k}$, including the set of codewords.

\noindent \textit{Example:} Let us continue with our $\left[3,2\right]$-code $C^{A}_0$ and consider the vector $\boldsymbol{a} = {100}$. 
The coset for this vector will be: $\left\{{100}, {010}, {001}, {111} \right\}$.
This is the result of ${100}$ xored with each codeword in $C^{A}_0$.
The coset leader can be any of the vectors with the smallest Hamming weight. 
In this case, any of the weight 1 vectors ${100}$, ${010}$, or ${001} $ could be chosen.
The total number of cosets of our $\left[3,2\right]$-code $C^{A}_0$ would be $2^{3-2} = 2$, meaning that no other 3-bit vector $\boldsymbol{a}$ produces a new coset.

Since there is an equivalence between binary linear codes and linear correctors \cite{Lacharme2008}, we will sometimes interchangeably use the terms corrector and code.

\section{Previous Work}
\label{Sec:Previous_work_linear_correctors}

The relationship between a code's weight distribution and the output of a linear corrector was first noted by Lacharme in \cite{Lacharme2009}, although the previously established min-entropy bound in \cite{Lacharme2008} was not improved.
Zhou et al. \cite{Zhou2011, Zhou2012} studied the exact, average, and asymptotic performance of linear correctors and more general random binary matrices, but only in terms of their statistical distance from the uniform distribution, without considering the entropy rate. 
In \cite{Kwok2011}, Kwok et al. compared the performance of Von Neumann unbiasing, XOR function, and linear correctors with respect to throughput reduction, post-processed bit bias, and adversarial bias reduction. 
However, their study did not consider the performance of these post-processing techniques for non-identically distributed input bits, nor did it account for the correlation between the output bits of a linear corrector and, therefore, the total entropy of the output. 
In contrast, Meneghetti et al. \cite{Meneghetti2014} and Tomasi et al. \cite{Tomasi2017} provided a bound on the statistical distance of linear correctors' output from the uniform distribution based on the code's weight distribution, and they also determined a lower bound on the Shannon entropy using Sason's theorem \cite{Sason2013}, which relates statistical distance and entropy. 
However, this bound is loose because it relies on the statistical distance bound and does not apply to the min-entropy, which is always lower than the Shannon entropy.

In the following section, we will use and expand on two older results from coding theory to improve Lacharme's bound: Sullivan's subgroup-coset inequality \cite{Sullivan1967} and its generalization by \v{Z}ivkovi\'{c} \cite{Zivkovic1991}. Sullivan showed in \cite{Sullivan1967} that when all coordinate 1-probabilities of $n$-bit vectors are smaller than 0.5, the probability of the set of codewords is the highest among all coset probabilities. 
\v{Z}ivkovi\'{c} later demonstrated in \cite{Zivkovic1991} that this relation also holds for any $q$-ary linear code, where $q$ is a prime power, even when individual coordinate 1-probabilities are different but all smaller than 0.5.

\section{Improving the Min-entropy Bound}
\label{Sec:Improved_bound}

We improve the min-entropy bound for linear correctors by first generalizing Sullivan's subgroup-coset inequality \cite{Sullivan1967} for binary linear codes and cases when the coordinate 1-probabilities are different and not upper limited to 0.5.
First, we recall a lemma from \cite{Sullivan1967} that will also be used in our proofs.
\begin{lemma}[adapted from \cite{Sullivan1967}]
	\label{lemma:subgroup_order}
Let $C_0$ be a binary linear $\left[n, k\right]$-code, and let $\boldsymbol{e}$, HW$\left(\boldsymbol{e}\right) = l$, be a coset leader in some coset of $C_0$.
	Then the code $C^{\prime}_0$, obtained by deleting $l$ coordinates in which $\boldsymbol{e}$ is 1, is a binary linear $\left[n-l, k\right]$-code.
\end{lemma}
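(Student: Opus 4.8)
The plan is to realize the deletion of the $l$ coordinates as a coordinate projection (puncturing map) $\pi : \mathds{F}_2^n \to \mathds{F}_2^{n-l}$ that drops exactly the coordinates in the set $T := \{\, i : \boldsymbol{e}[i] = 1 \,\}$, $|T| = l$. Since $\pi$ is an $\mathds{F}_2$-linear map, the image $C'_0 = \pi(C_0)$ is automatically a binary linear code of length $n-l$; the only nontrivial assertion is that $\dim C'_0 = k$, i.e., that $\pi$ restricted to $C_0$ is injective. So the whole proof reduces to showing $\ker\!\left(\pi|_{C_0}\right) = \{\boldsymbol{0}\}$.

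First I would note that a codeword $\boldsymbol{c} \in C_0$ lies in $\ker\!\left(\pi|_{C_0}\right)$ precisely when its support is contained in $T$, that is, $\boldsymbol{c}[i] = 0$ for all $i \notin T$. Suppose, for contradiction, that there is a nonzero such codeword $\boldsymbol{c}$, so $1 \le \mathrm{HW}(\boldsymbol{c}) \le l$. Then I would compare $\boldsymbol{e}$ with $\boldsymbol{e} \oplus \boldsymbol{c}$, which belongs to the same coset of $C_0$ as $\boldsymbol{e}$: outside $T$ both vectors vanish, while on $T$ the vector $\boldsymbol{e}$ is all ones, so adding $\boldsymbol{c}$ flips exactly the $\mathrm{HW}(\boldsymbol{c})$ coordinates of $T$ in the support of $\boldsymbol{c}$. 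Hence $\mathrm{HW}(\boldsymbol{e} \oplus \boldsymbol{c}) = l - \mathrm{HW}(\boldsymbol{c}) < l = \mathrm{HW}(\boldsymbol{e})$, contradicting the hypothesis that $\boldsymbol{e}$ is a minimum-weight vector of its coset (a coset leader). Therefore $\ker\!\left(\pi|_{C_0}\right)$ is trivial, $\pi|_{C_0}$ is injective, and $C'_0$ is a binary linear $[n-l,\, k]$-code; equivalently, any generator matrix of $C_0$ with the $l$ columns indexed by $T$ removed is a full-rank generator matrix of $C'_0$.

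This is essentially a one-line weight comparison, so I do not anticipate a genuine obstacle; the only points requiring care are the elementary bookkeeping that a kernel element must be supported inside $T$, and that the Hamming weight strictly decreases when $\boldsymbol{c} \neq \boldsymbol{0}$, which is exactly what rules out a shorter coset leader and forces the dimension to be preserved.
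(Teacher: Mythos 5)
Your proof is correct. The paper does not actually prove this lemma---it is imported directly from Sullivan's 1967 paper---but your argument is the standard one underlying that result: a nonzero codeword killed by the puncturing map would have its support inside the support of $\boldsymbol{e}$, so adding it to $\boldsymbol{e}$ would only turn ones into zeros and produce a strictly lighter vector in the same coset, contradicting the coset-leader (minimum-weight) property; hence the projection restricted to $C_0$ is injective and the dimension $k$ is preserved.
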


We now introduce our first inequality theorem, named the \textit{coset-coset inequality}. 
This theorem establishes a relationship between the probabilities of two distinct cosets belonging to a specific binary linear code. 
It offers a distinctive perspective when compared to the subgroup-coset inequalities proposed by Sullivan and \v{Z}ivkovi\'{c}. 
The proof of this theorem builds upon the foundations laid out in \cite{Sullivan1967} and \cite{Zivkovic1991}.

\begin{theorem}
	\label{thm:generalized_Sullivan}
Let $C_i$, $0 \leq i \leq 2^{n-k}-1$, denote sets of $n$-bit element vectors $\boldsymbol{x}$, which are realizations of the $n$-bit row vector random variable $\mathcal{X}$ with independent coordinates. Let $C_0$ be a binary linear $\left[n, k\right]$-code, and all other $C_i$, $i \not = 0$, are cosets of $C_0$.
	Let $C_{i_{max}}$ denote the set that contains the most probable element vector $\boldsymbol{x_{max}}$ with coordinates
\begin{equation*}
		\boldsymbol{x_{max}}\left[j\right] =
		\begin{cases}
			1,\, \text{if} \,\, 0.5 \leq p_j \leq 1,\\
			0,\, \text{if} \,\, 0 \leq p_j < 0.5,
		\end{cases}
	\end{equation*}
	where $p_j$ is the 1-probability of bit in $j$ coordinate, $0 \leq j \leq n-1$.
Then it holds $\mathds{P}\left[C_{i_{max}}\right] \geq \mathds{P}\left[C_i\right]$, and we call $C_{i_{max}}$ the most probable set.
\end{theorem}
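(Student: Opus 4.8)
The plan is to separate the argument into a normalization step and a core positivity inequality. For the normalization, note that the coordinate–complementation map $T\colon\boldsymbol{x}\mapsto\boldsymbol{x}\oplus\boldsymbol{x_{max}}$ is an affine bijection of $\mathds{F}_2^{n}$ that permutes the cosets of $C_0$ (the coset $\boldsymbol{a}\oplus C_0$ is sent to $(\boldsymbol{a}\oplus\boldsymbol{x_{max}})\oplus C_0$) and carries $C_{i_{max}}$ onto $C_0$, since $\boldsymbol{x_{max}}\in C_{i_{max}}$ maps to $\boldsymbol{0}\in C_0$. If at the same time we replace each 1-probability $p_j$ by $\min\{p_j,1-p_j\}$ — i.e.\ complement exactly the coordinates with $p_j\ge\tfrac12$ — then a coordinatewise check shows that the probability of every set is preserved, $\mathds{P}_{p}[S]=\mathds{P}_{p'}[T(S)]$ for all $S\subseteq\mathds{F}_2^{n}$. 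Hence it suffices to treat the normalized case: all $p_j\le\tfrac12$, $\boldsymbol{x_{max}}=\boldsymbol{0}$, and $C_{i_{max}}=C_0$; that is, to show $\mathds{P}[C_0]\ge\mathds{P}[C_i]$ for every coset $C_i$ of $C_0$ when all 1-probabilities are at most $\tfrac12$.

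For the core inequality I would expand the coset probabilities over the dual code. Writing $C_i=\boldsymbol{a}\oplus C_0$ and using the identity $\mathds{1}[\boldsymbol{c}\in C_0]=2^{-(n-k)}\sum_{\boldsymbol{d}\in C^{\bot}_0}(-1)^{\langle\boldsymbol{c},\boldsymbol{d}\rangle}$ (valid because $(C^{\bot}_0)^{\bot}=C_0$ and $|C^{\bot}_0|=2^{n-k}$; here $\langle\boldsymbol{c},\boldsymbol{d}\rangle=\sum_i\boldsymbol{c}[i]\boldsymbol{d}[i]$ in $\mathds{F}_2$), interchanging the two sums and factoring coordinatewise yields the MacWilliams-type formula
\[
  \mathds{P}[\boldsymbol{a}\oplus C_0]=2^{-(n-k)}\sum_{\boldsymbol{d}\in C^{\bot}_0}(-1)^{\langle\boldsymbol{a},\boldsymbol{d}\rangle}\prod_{j=0}^{n-1}\bigl(1-2p_j\bigr)^{\boldsymbol{d}[j]},
\]
in which $1-2p_j$ plays the role of the bias of the $j$th coordinate. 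Subtracting the instance $\boldsymbol{a}=\boldsymbol{0}$, the terms with $\langle\boldsymbol{a},\boldsymbol{d}\rangle=0$ cancel and the rest is doubled:
\[
  \mathds{P}[C_0]-\mathds{P}[C_i]=2^{1-(n-k)}\!\!\sum_{\substack{\boldsymbol{d}\in C^{\bot}_0\\ \langle\boldsymbol{a},\boldsymbol{d}\rangle=1}}\ \prod_{j=0}^{n-1}\bigl(1-2p_j\bigr)^{\boldsymbol{d}[j]}.
\]
In the normalized case every factor $1-2p_j$ is non-negative, so every summand is non-negative and the right-hand side is $\ge 0$; combined with the normalization this gives $\mathds{P}[C_{i_{max}}]\ge\mathds{P}[C_i]$ for all $i$. (One can even skip the normalization and apply the formula directly, noting that by the definition of $\boldsymbol{x_{max}}$ one has $(-1)^{\langle\boldsymbol{x_{max}},\boldsymbol{d}\rangle}\prod_{j}(1-2p_j)^{\boldsymbol{d}[j]}=\bigl|\prod_{j}(1-2p_j)^{\boldsymbol{d}[j]}\bigr|$, whence $\mathds{P}[C_{i_{max}}]=2^{-(n-k)}\sum_{\boldsymbol{d}}\bigl|\prod_j(1-2p_j)^{\boldsymbol{d}[j]}\bigr|\ge\mathds{P}[C_i]$.)

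A more combinatorial route, closer to Sullivan \cite{Sullivan1967}, \v{Z}ivkovi\'{c} \cite{Zivkovic1991} and Lemma~\ref{lemma:subgroup_order}, is induction on the Hamming weight $l$ of a coset leader $\boldsymbol{e}$ of $C_i$, again in the normalized setting. If some coordinate $j^{\ast}$ in the support of $\boldsymbol{e}$ carries a $0$ in every codeword of $C_0$, complementing $j^{\ast}$ sends $C_i$ to another coset whose leader has weight $l-1$ and whose probability is at least $\mathds{P}[C_i]$ (since $1-p_{j^{\ast}}\ge p_{j^{\ast}}$), so the induction hypothesis applies; the remaining case, where each support coordinate of $\boldsymbol{e}$ is a $1$ in exactly half of the codewords, is handled via the shortened $\left[n-l,k\right]$-code of Lemma~\ref{lemma:subgroup_order}. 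I expect that last case — and, if one prefers to quote \v{Z}ivkovi\'{c}'s theorem verbatim, the boundary value $p_j=\tfrac12$ that his statement formally excludes — to be the only delicate point. For this reason I would make the dual-code computation the main line of proof: once the MacWilliams-type identity is in place, the inequality is immediate because every surviving term is a product of non-negative numbers.
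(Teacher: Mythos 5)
Your proof is correct, and it takes a genuinely different route from the paper's. The paper establishes the theorem by a double induction --- an outer induction on the code dimension $k$ and an inner induction on the Hamming weight of the coset leader relating $C_i$ to $C_{i_{max}}$ --- built on a standard-array rearrangement and the coordinate-deletion Lemma~\ref{lemma:subgroup_order}, directly extending Sullivan's and \v{Z}ivkovi\'{c}'s arguments to unrestricted 1-probabilities. You instead derive the exact character-sum (MacWilliams-type) expansion of each coset probability over the dual code, after which the inequality is immediate from the triangle inequality: by the definition of $\boldsymbol{x_{max}}$, the sign $(-1)^{\langle\boldsymbol{x_{max}},\boldsymbol{d}\rangle}$ exactly cancels the sign of $\prod_j(1-2p_j)^{\boldsymbol{d}[j]}$ (a term containing a factor with $p_j=\tfrac12$ vanishes, so the boundary case is harmless), giving $\mathds{P}[C_{i_{max}}]=2^{-(n-k)}\sum_{\boldsymbol{d}\in C^{\bot}_0}\bigl|\prod_j(1-2p_j)^{\boldsymbol{d}[j]}\bigr|\ge\mathds{P}[C_i]$; the normalization step and the orthogonality identity you invoke (which needs $(C^{\bot}_0)^{\bot}=C_0$) are both standard and correctly justified. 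What your approach buys: it is far shorter, it produces an exact closed form for every coset probability rather than only their ordering, and it surfaces the MacWilliams connection --- which the paper only uses later, in the proof of Theorem~\ref{thm:main_thm} --- at the very start. What the paper's approach buys: it is elementary (no characters or Fourier inversion), it stays close to the cited prior work of Sullivan and \v{Z}ivkovi\'{c}, and the coordinate-partitioning and coordinate-deletion machinery it develops is reused almost verbatim in the proof of Lemma~\ref{lemma:helper}. Your third, combinatorial sketch is indeed the delicate one and is the closest in spirit to what the paper actually does; you were right to relegate it to a remark and to lead with the dual-code computation.
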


\begin{proof}
	
First, we arrange all possible $2^n$ vectors in the standard array such that the $i$-th row contains elements of the set $C_i$.
	The first entry in each row $\boldsymbol{c_{i,\, 0}}$ is a coset leader $\boldsymbol{e_i}$, i.e., a vector with the lowest weight in the corresponding set, while all other row entries $\boldsymbol{c_{i,\, j}}$ are obtained by adding $\boldsymbol{e_i}$ and the corresponding entry in the 0-th row:  $\boldsymbol{c_{i,\, j}} = \boldsymbol{e_i} \oplus \boldsymbol{c_{0,\, j}}$, $1 \leq j \leq 2^{k}-1$.
Consider now the set that contains the most probable vector $C_{i_{max}}$ with coset leader $\boldsymbol{e_{i_{max}}}$ and some arbitrary but fixed set $C_i$, $C_i \neq C_{i_{max}}$, with coset leader $\boldsymbol{e_{i}}$, as well as their corresponding rows in the standard array.
If $\boldsymbol{e_{i}} \oplus \boldsymbol{e_{i_{max}}}$ is in some set $C_l$, but is not equal to its coset leader $\boldsymbol{e_{l}}$, we rearrange the entries in the $i_{max}$-th row so that for the first entry $\boldsymbol{c_{i_{max},\, 0}}$ we select an element of $C_{i_{max}}$ that is equal to $\boldsymbol{e_{i}} \oplus \boldsymbol{e_{l}}$.
	All other row entries are rearranged so that the $j$-th element is equal to $\boldsymbol{c_{i_{max},\, j}} = \boldsymbol{c_{i_{max},\, 0}} \oplus \boldsymbol{c_{0,\, j}} = \boldsymbol{e_{i}} \oplus \boldsymbol{e_{l}} \oplus \boldsymbol{c_{0,\, j}}$.
	On the other hand, no rearrangements are made if $\boldsymbol{e_{i}} \oplus \boldsymbol{e_{i_{max}}} = \boldsymbol{e_{l}}$ already holds.
After possible rearrangement, any entry in the $i$-th row $\boldsymbol{c_{i,\, j}}$ is related to the entry $\boldsymbol{c_{i_{max},\, j}}$ in the $i_{max}$-th row by relation $\boldsymbol{c_{i,\, j}} = \boldsymbol{c_{i_{max},\, j}} \oplus \boldsymbol{e_l}$.
	Since all entries in the $i$-th and $i_{max}$-th row are also elements of the sets $C_i$ and $C_{i_{max}}$, respectively, this shows that every element in $C_{i_{max}}$ has exactly one corresponding element in $C_i$ from which it differs only in coordinates in which $\boldsymbol{e_{l}}$ is 1.
	We will prove the theorem by double induction over the code dimension $k$, $0 \leq k < n$, and the Hamming weight of the coset leaders HW$\left(\boldsymbol{e_{l}}\right) \geq 1$.

	\noindent \textit{Base case.} For $k=0$ and HW$\left(\boldsymbol{e_{l}}\right) = 1$, we have $C_0 = \left\{\boldsymbol{0}\right\}$, where $\boldsymbol{0}$ is the all-zero vector. Since, in this case, each set contains only one $n$-bit vector, it is clear that the set that includes the most probable vector $\boldsymbol{x_{max}}$ will have probability $\mathds{P}\left[C_{i_{max}}\right] = \mathds{P}\left[\boldsymbol{x_{max}}\right]$ and that $\mathds{P}\left[C_{i_{max}}\right] \geq \mathds{P}\left[C_i\right]$ always holds.

	\noindent \textit{Outer induction hypothesis.} Assume that the theorem is true for all binary linear codes of dimension $k \leq k^{\prime}$ and HW$\left(\boldsymbol{e_{l}}\right) = 1$.

	\noindent \textit{Outer induction step.} We will show that the outer induction hypothesis implies that the theorem also holds for all binary linear codes of dimension $ k = k^{\prime}+1$ and HW$\left(\boldsymbol{e_{l}}\right) = 1$.
	Suppose that $\boldsymbol{e_{l}}$ has 1 in coordinate $l_0$ and let $I_{l_0} = \left\{0,\dots, n-1\right\} \backslash\left\{l_0\right\}$.
We now partition sets $C_i$ and $C_{i_{max}}$ into two subsets, depending on the value in coordinate $l_0$ of their element vectors: $C_{i_{max}}^{l_0,\, b} =\left\{\boldsymbol{x} \in C_{i_{max}} \mid \boldsymbol{x}[l_0] = b \right\} $ and $C_{i}^{l_0,\, b} =\left\{\boldsymbol{x} \in C_{i} \mid \boldsymbol{x}[l_0] = b \right\}$, $b \in \{0,1\}$.
	
	\textit{Case 1a:} Suppose first that $\boldsymbol{x}[l_0] = \hat{b}$ holds for all $\boldsymbol{x} \in C_{i_{max}}$, where $\hat{b}$ is fixed to either 0 or 1.
	Then the order of $C_{i_{max}}^{l_0,\hat{b}}$ is $2^{k^{\prime}+1}$ since $C_{i_{max}}^{l_0,\hat{b}} = C_{i_{max}}$ and $C_{i_{max}}^{l_0, 1-\hat{b}} = \varnothing$.
	Given that the elements in $C_{i}^{l_0, 1-\hat{b}}$ differ from the elements in $C_{i_{max}}^{l_0,\hat{b}}$ only in the $l_0$ coordinate, we have that the order of $C_{i}^{l_0, 1-\hat{b}}$ is also $2^{k^{\prime}+1}$ and $C_{i}^{l_0, 1-\hat{b}} = C_{i}$, while $C_{i}^{l_0,\hat{b}} = \varnothing$.
	Therefore, we can express the probabilities of the sets $C_{i_{max}}$ and $C_i$ as
	\begin{multline}\label{eq:p_c_imax_0_m_1}
		\mathds{P}\left[C_{i_{max}}\right] = \mathds{P}\left[C_{i_{max}}^{l_0,\hat{b}}\right] =\\ \left(\hat{b} p_{l_0} + \left(1 - \hat{b}\right) \left(1 - p_{l_0}\right) \right)\cdot \\ \sum_{\boldsymbol{x} \in C_{i_{max}}} \prod_{i \in I_{l_0}} \left(\boldsymbol{x}\left[i\right] p_{i} + \left(1 - \boldsymbol{x}\left[i\right]\right) \left(1 - p_{i}\right) \right),  
	\end{multline}
	and
	\begin{multline}\label{eq:p_c_i_0_m_1}
		\mathds{P}\left[C_{i}\right] = \mathds{P}\left[C_{i}^{l_0,1-\hat{b}}\right] =\\ \left(\left(1 -\hat{b}\right) p_{l_0} + \hat{b} \left(1 - p_{l_0}\right) \right)\cdot\\ \sum_{\boldsymbol{x} \in C_{i}} \prod_{i \in I_{l_0}} \left(\boldsymbol{x}\left[i\right] p_{i} + \left(1 - \boldsymbol{x}\left[i\right]\right) \left(1 - p_{i}\right) \right).  
	\end{multline}
	Note that
	\begin{multline}
		\sum_{\boldsymbol{x} \in C_{i}} \prod_{i \in I_{l_0}} \left(\boldsymbol{x}\left[i\right] p_{i} + \left(1 - \boldsymbol{x}\left[i\right]\right) \left(1 - p_{i}\right) \right) = \\
		\sum_{\boldsymbol{x} \in C_{i_{max}}} \prod_{i \in I_{l_0}} \left(\boldsymbol{x}\left[i\right] p_{i} + \left(1 - \boldsymbol{x}\left[i\right]\right) \left(1 - p_{i}\right) \right),
	\end{multline}
	holds since the elements in $C_{i}$ and $C_{i_{max}}$ differ only in the $l_0$ coordinate.
	
	\textit{Subcase 1.1a:} For $\hat{b} = 0$, it holds $1 - p_{l_0} > p_{l_0}$ since $0 \leq p_{l_0} < 0.5$, which follows from the fact that $\boldsymbol{x_{max}} \in C_{i_{max}}$ and all vectors in $C_{i_{max}}$ have 0 in coordinate $l_0$ for $\hat{b} = 0$.
	Therefore, from (\ref{eq:p_c_imax_0_m_1}) and (\ref{eq:p_c_i_0_m_1}), we have the inequality
\begin{multline}
		\mathds{P}\left[C_{i_{max}}\right] =\\  \left(1 - p_{l_0}\right) \sum_{\boldsymbol{x} \in C_{i_{max}}} \prod_{i \in I_{l_0}} \left(\boldsymbol{x}\left[i\right] p_{i} + \left(1 - \boldsymbol{x}\left[i\right]\right) \left(1 - p_{i}\right) \right) > \\ \mathds{P}\left[C_{i}\right] = p_{l_0} \sum_{\boldsymbol{x} \in C_{i}} \prod_{i \in I_{l_0}} \left(\boldsymbol{x}\left[i\right] p_{i} + \left(1 - \boldsymbol{x}\left[i\right]\right) \left(1 - p_{i}\right) \right).
	\end{multline}
	
	\textit{Subcase 1.2a:} For $\hat{b} = 1$, all vectors in $C_{i_{max}}$ have 1 in coordinate $l_0$ and $\boldsymbol{x_{max}} \in C_{i_{max}}$.
	Thus, $p_{l_0} \geq 1-p_{l_0}$, since $0.5 \leq p_{l_0} \leq 1$.
	Hence, $\mathds{P}\left[C_{i_{max}}\right] \geq \mathds{P}\left[C_{i}\right]$ holds in this case as well, which can be seen by substituting $\hat{b} = 1$ in (\ref{eq:p_c_imax_0_m_1}) and (\ref{eq:p_c_i_0_m_1}):
	\begin{multline}
		\mathds{P}\left[C_{i_{max}}\right] =\\  p_{l_0} \sum_{\boldsymbol{x} \in C_{i_{max}}} \prod_{i \in I_{l_0}} \left(\boldsymbol{x}\left[i\right] p_{i} + \left(1 - \boldsymbol{x}\left[i\right]\right) \left(1 - p_{i}\right) \right) \geq \\ \mathds{P}\left[C_{i}\right] = \left(1 - p_{l_0}\right) \sum_{\boldsymbol{x} \in C_{i}} \prod_{i \in I_{l_0}} \left(\boldsymbol{x}\left[i\right] p_{i} + \left(1 - \boldsymbol{x}\left[i\right]\right) \left(1 - p_{i}\right) \right).
	\end{multline}

	\textit{Case 2a:} Suppose the values in coordinate $l_0$ are not identical for all vectors in $C_{i_{max}}$.
	The orders of $C_{i_{max}}^{l_0, 0}$, $C_{i_{max}}^{l_0, 1}$, $C_{i}^{l_0, 0}$ and $C_{i}^{l_0, 1}$ are all equal to $2^{k^{\prime}}$.
	We now delete component in coordinate $l_0$ of every element in both $C_{i}$ and $C_{i_{max}}$ and denote the resulting sets by $C^{\overline{l_0}}_{i}$ and $C^{\overline{l_0}}_{i_{max}}$, and the corresponding partitioning subsets by $C^{\overline{l_0},\, 0}_{i}$, $C^{\overline{l_0}, \, 1}_{i}$, $C^{\overline{l_0}, \, 0}_{i_{max}}$ and $C^{\overline{l_0}, \, 1}_{i_{max}}$.
	Since $C_i$ and $C_{i_{max}}$ are either equivalent to $C_0$ or are its proper cosets, from Lemma~\ref{lemma:subgroup_order}, we have that the orders of $C^{\overline{l_0}}_{i}$ and $C^{\overline{l_0}}_{i_{max}}$ are $2^{k^{\prime}+1}$.
	Consequently, the orders of $C^{\overline{l_0}, \, 0}_{i}$, $C^{\overline{l_0}, \, 1}_{i}$, $C^{\overline{l_0}, \, 0}_{i_{max}}$ and $C^{\overline{l_0}, \, 1}_{i_{max}}$ will be $2^{k^{\prime}}$.
	Since the elements in $C_i$ differ from the elements in $C_{i_{max}}$ only in the coordinate $l_0$, it follows that $C^{\overline{l_0}, \, 0}_{i} = C^{\overline{l_0}, \, 1}_{i_{max}}$ and $C^{\overline{l_0}, \, 1}_{i} = C^{\overline{l_0}, \, 0}_{i_{max}}$.
	The set probabilities $\mathds{P}\left[C_{i_{max}}\right]$ and $\mathds{P}\left[C_{i}\right]$ can be expressed as
	\begin{multline}
		\mathds{P}\left[C_{i_{max}}\right] = \mathds{P}\left[C_{i_{max}}^{l_0, 1}\right] + \mathds{P}\left[C_{i_{max}}^{l_0, 0}\right] =\\ 
		p_{l_0}\mathds{P}\left[C^{\overline{l_0}, \, 1}_{i_{max}}\right] + \left(1 - p_{l_0}\right)\mathds{P}\left[C^{\overline{l_0}, \, 0}_{i_{max}}\right]
	\end{multline}
	and
	\begin{multline}
		\mathds{P}\left[C_{i}\right] = \mathds{P}\left[C_{i}^{l_0, 1}\right] + \mathds{P}\left[C_{i}^{l_0, 0}\right] =\\ 
		p_{l_0}\mathds{P}\left[C^{\overline{l_0}, \, 1}_{i}\right] + \left(1 - p_{l_0}\right)\mathds{P}\left[C^{\overline{l_0}, \, 0}_{i}\right] =\\ p_{l_0}\mathds{P}\left[C^{\overline{l_0}, \, 0}_{i_{max}}\right] + \left(1 - p_{l_0}\right)\mathds{P}\left[C^{\overline{l_0}, \, 1}_{i_{max}}\right].
	\end{multline}
	Thus, we obtain
	\begin{multline}\label{eq: p_c_imax-c_i_m_1}
		\mathds{P}\left[C_{i_{max}}\right] - \mathds{P}\left[C_{i}\right] =\\
		p_{l_0}\mathds{P}\left[C^{\overline{l_0}, \, 1}_{i_{max}}\right] - \left(1 - p_{l_0}\right)\mathds{P}\left[C^{\overline{l_0}, \, 1}_{i_{max}}\right]\\
		+ \left(1 - p_{l_0}\right)\mathds{P}\left[C^{\overline{l_0}, \, 0}_{i_{max}}\right] - p_{l_0}\mathds{P}\left[C^{\overline{l_0}, \, 0}_{i_{max}}\right] =\\
		\left(1 - 2p_{l_0}\right)\left(\mathds{P}\left[C^{\overline{l_0}, \, 0}_{i_{max}}\right] - \mathds{P}\left[C^{\overline{l_0}, \, 1}_{i_{max}}\right]\right).
	\end{multline}
	
	\textit{Subcase 2.1a:} If $\boldsymbol{x_{max}} \in C_{i_{max}}^{l_0, 0}$, then the most probable ${\left(n-1\right)}$-bit vector obtained from $\boldsymbol{x_{max}}$ by deleting its $l_0$ coordinate $\boldsymbol{x^{\overline{l_0}}_{max}} = \left(\boldsymbol{x_{max}}\left[0\right]...\, \boldsymbol{x_{max}}\left[l_0-1\right] {\boldsymbol{x_{max}}}\left[l_0+1\right] ... \, \boldsymbol{x_{max}}\left[n-1\right] \right)$ will be in the subset $C^{\overline{l_0},\, 0}_{i_{max}}$. 
	Hence, from the induction hypothesis $\mathds{P}\left[C^{\overline{l_0},\, 0}_{i_{max}}\right] \geq \mathds{P}\left[C^{\overline{l_0},\, 1}_{i_{max}}\right]$.
	We note that since $\boldsymbol{x_{max}} \in C_{i_{max}}^{l_0, 0}$, we have $0 \leq p_{l_0} < 0.5$, and thus, $1 - 2p_{l_0} > 0$.
	Based on this observation and the outer induction hypothesis, we have that both multiplication terms in the last line of (\ref{eq: p_c_imax-c_i_m_1}) are non-negative, implying that $\mathds{P}\left[C_{i_{max}}\right] \geq \mathds{P}\left[C_{i}\right]$.
	
	\textit{Subcase 2.2a:} If $\boldsymbol{x_{max}} \in C_{i_{max}}^{l_0, 1}$, then $\boldsymbol{x^{\overline{l_0}}_{max}}$ will be an element of the subset $C^{\overline{l_0},\, 1}_{i_{max}}$. 
	From the induction hypothesis, in this case, we have $\mathds{P}\left[C^{\overline{l_0},\, 1}_{i_{max}}\right] \geq \mathds{P}\left[C^{\overline{l_0},\, 0}_{i_{max}}\right]$.
	Furthermore, since $\boldsymbol{x_{max}} \in C_{i_{max}}^{l_0, 1}$, we have $0.5 \leq p_{l_0} \leq 1$, and thus, $1 - 2p_{l_0} \leq 0$.
	Therefore, both terms in the last line of (\ref{eq: p_c_imax-c_i_m_1}) are non-positive, implying that their product is non-negative, and $\mathds{P}\left[C_{i_{max}}\right] \geq \mathds{P}\left[C_{i}\right]$ holds in this case as well.

	By induction, the theorem is true for all binary linear codes' dimensions $k$, $0 \leq k < n$, and HW$\left(\boldsymbol{e_l}\right) = 1$.
	
\noindent \textit{Inner induction hypothesis.} Assume that the theorem holds for all binary linear codes of dimension $k$ and HW$\left(\boldsymbol{e_l}\right)$ values not greater than $m$.
	
	\noindent \textit{Inner induction step.} We proceed with the second induction step by showing that the inner induction hypothesis implies that the theorem holds for HW$\left(\boldsymbol{e_l}\right) = m+1$ and all binary linear codes of dimension $k$.
	Let $l_m$ be one of the $m+1$ possible positions in which $\boldsymbol{e_l}$ has 1, and let $I_{l_m} = \left\{0,\dots, n-1\right\} \backslash\left\{l_m\right\}$.
We separate all elements in both $C_i$ and $C_{i_{max}}$ into two subsets according to their coordinate value in coordinate $l_m$: $C^{l_m,\, b}_{i_{max}} = \left\{\boldsymbol{x} \in C_{i_{max}} \mid \boldsymbol{x}[l_m] = b\right\}$ and $C^{l_m,\, b}_{i} = \left\{\boldsymbol{x} \in C_{i} \mid \boldsymbol{x}[l_m] = b\right\}$, $b \in \left\{0,1\right\}$.
	Let $C^{\overline{l_m}}_{i_{max}}$ and $C^{\overline{l_m}}_{i}$ be sets obtained from $C_{i_{max}}$ and $C_{i}$ by removing the component in coordinate $l_m$ in all vectors in both sets.
	According to Lemma~\ref{lemma:subgroup_order}, the orders of $C^{\overline{l_m}}_{i_{max}}$ and $C^{\overline{l_m}}_{i}$ will remain $2^{k}$ and the elements in $C^{\overline{l_m}}_{i}$ will differ from the elements in $C^{\overline{l_m}}_{i_{max}}$ in coordinates in which vector $\boldsymbol{e^{\overline{l_m}}_l} = (\boldsymbol{e_l}[0] ...\, \boldsymbol{e_l}[l_m-1], \boldsymbol{e_l}[l_m+1] ...\, \boldsymbol{e_l}[n-1])$ is 1.
	Since the most probable ${\left(n-1\right)}$-bit vector $\boldsymbol{x^{\overline{l_m}}_{max}} = (\boldsymbol{x_{max}}[0], ...\, \boldsymbol{x_{max}}[l_m-1], \boldsymbol{x_{max}}[l_m+1], ...\, \boldsymbol{x_{max}}[n-1] )$ will be in set $C^{\overline{l_m}}_{i_{max}}$ and HW$\left(\boldsymbol{e^{\overline{l_m}}_l}\right) = m$, by the inner induction hypothesis, we obtain 
	\begin{multline}\label{eq:c_lm_imax_ineq_c_lm_i}
		\mathds{P}\left[C^{\overline{l_m}}_{i_{max}}\right] = \sum_{\boldsymbol{x} \in C_{i_{max}}} \prod_{i \in I_{l_m}} \left(\boldsymbol{x}\left[i\right] p_{i} + \left(1 - \boldsymbol{x}\left[i\right]\right) \left(1 - p_{i}\right) \right) \\ \geq \mathds{P}\left[C^{\overline{l_m}}_{i}\right] =  \sum_{\boldsymbol{x} \in C_{i}} \prod_{i \in I_{l_m}} \left(\boldsymbol{x}\left[i\right] p_{i} + \left(1 - \boldsymbol{x}\left[i\right]\right) \left(1 - p_{i}\right) \right).
	\end{multline}
	
	\textit{Case 1b:} Suppose $\boldsymbol{x}[l_m] = \hat{b}$ holds for all $\boldsymbol{x} \in C_{i_{max}}$, where $\hat{b}$ is fixed to either a 0 or a 1.
	The order of $C^{l_m,\hat{b}}_{i_{max}}$ is then $2^{k}$ and $C^{l_m,1-\hat{b}}_{i_{max}} = \varnothing$.
	Since $\boldsymbol{e_l}$ has 1 in coordinate $l_m$, all vectors in $C_{i}$ will have $1-\hat{b}$ in coordinate $l_m$.
	Hence, the order of $C^{l_m,1-\hat{b}}_{i}$ is also $2^{k}$ and $C^{l_m,\hat{b}}_{i} = \varnothing$. 
	For the probabilities of sets $C_{i_{max}}$ and $C_i$, we have
	\begin{multline}\label{eq:p_c_imax_lm}
		\mathds{P}\left[C_{i_{max}}\right] = \mathds{P}\left[C^{l_m,\hat{b}}_{i_{max}}\right] =\\
		\left(\hat{b} p_{l_m} + \left(1 - \hat{b}\right) \left(1 - p_{l_m}\right) \right)\cdot\\ \sum_{\boldsymbol{x} \in C_{i_{max}}} \prod_{i \in I_{l_m}} \left(\boldsymbol{x}\left[i\right] p_{i} + \left(1 - \boldsymbol{x}\left[i\right]\right) \left(1 - p_{i}\right) \right),
	\end{multline} 
	and
	\begin{multline}\label{eq:p_c_i_lm}
		\mathds{P}\left[C_{i}\right] = \mathds{P}\left[C^{l_m,1-\hat{b}}_{i}\right] =\\
		\left(\left(1 -\hat{b}\right) p_{l_m} + \hat{b} \left(1 - p_{l_m}\right) \right)\cdot\\ \sum_{\boldsymbol{x} \in C_{i}} \prod_{i \in I_{l_m}} \left(\boldsymbol{x}\left[i\right] p_{i} + \left(1 - \boldsymbol{x}\left[i\right]\right) \left(1 - p_{i}\right) \right).
	\end{multline} 
	By substituting $\mathds{P}\left[C^{\overline{l_m}}_{i_{max}}\right]$ and $\mathds{P}\left[C^{\overline{l_m}}_{i}\right]$ from (\ref{eq:c_lm_imax_ineq_c_lm_i})  in (\ref{eq:p_c_imax_lm}) and (\ref{eq:p_c_i_lm}), and then subtracting $\mathds{P}\left[C_{i}\right]$ from $\mathds{P}\left[C_{i_{max}}\right]$, we obtain
	\begin{multline}\label{eq:p_cimax_minus_p_ci_equal}
		\mathds{P}\left[C_{i_{max}}\right] - \mathds{P}\left[C_{i}\right] =\\ \hat{b} \left(p_{l_m} \mathds{P}\left[C^{\overline{l_m}}_{i_{max}}\right] - \left(1 -p_{l_m}\right) \mathds{P}\left[C^{\overline{l_m}}_{i}\right] \right) +\\ \left(1 -\hat{b}\right)\left(\left(1 -p_{l_m}\right) \mathds{P}\left[C^{\overline{l_m}}_{i_{max}}\right] - p_{l_m} \mathds{P}\left[C^{\overline{l_m}}_{i}\right] \right).
	\end{multline}
	
	\textit{Subcase 1.1b:} For $\hat{b} = 0$, since $\boldsymbol{x_{max}} \in C_{i_{max}}$, we have $0 \leq p_{l_m} < 0.5$, thus, $\left(1 -p_{l_m}\right) > p_{l_m}$.
	Equation (\ref{eq:p_cimax_minus_p_ci_equal}) then becomes
	\begin{multline}\label{eq:p_c_imax_minus_p_c_i_x_0}
		\mathds{P}\left[C_{i_{max}}\right] - \mathds{P}\left[C_{i}\right] =\\ \left(1 -p_{l_m}\right) \mathds{P}\left[C^{\overline{l_m}}_{i_{max}}\right] - p_{l_m} \mathds{P}\left[C^{\overline{l_m}}_{i}\right].
	\end{multline}
	By multiplying both sides of (\ref{eq:c_lm_imax_ineq_c_lm_i}) by $\left(1 -p_{l_m}\right)$ and combining this result with $\left(1 -p_{l_m}\right) > p_{l_m}$, we have the inequality
	\begin{equation}
		\left(1 -p_{l_m}\right)\mathds{P}\left[C^{\overline{l_m}}_{i_{max}}\right] \geq \left(1 -p_{l_m}\right)\mathds{P}\left[C^{\overline{l_m}}_{i}\right] > p_{l_m}\mathds{P}\left[C^{\overline{l_m}}_{i}\right].
	\end{equation} 
	From the preceding inequality and (\ref{eq:p_c_imax_minus_p_c_i_x_0}), it holds $\mathds{P}\left[C_{i_{max}}\right] > \mathds{P}\left[C_{i}\right]$.	
	
	\textit{Subcase 1.2b:} Similarly, for $\hat{b} = 1$, we have $0.5 \leq p_{l_m} \leq 1$, thus, $p_{l_m} \geq \left(1 -p_{l_m}\right)$ and (\ref{eq:p_cimax_minus_p_ci_equal}) becomes
	\begin{multline}\label{eq:p_c_imax_minus_p_c_i_x_1}
		\mathds{P}\left[C_{i_{max}}\right] - \mathds{P}\left[C_{i}\right] =\\ p_{l_m} \mathds{P}\left[C^{\overline{l_m}}_{i_{max}}\right] - \left(1 -p_{l_m}\right) \mathds{P}\left[C^{\overline{l_m}}_{i}\right].
	\end{multline}
	By multiplying both sides of (\ref{eq:c_lm_imax_ineq_c_lm_i}) by $\left(1 -p_{l_m}\right)$ and combining this result with the inequality $p_{l_m} \geq \left(1 -p_{l_m}\right)$, we obtain
	\begin{equation}\label{eq:ineq_x_1_p_cimax_ci}
		p_{l_m}\mathds{P}\left[C^{\overline{l_m}}_{i_{max}}\right] \geq \left(1 -p_{l_m}\right)\mathds{P}\left[C^{\overline{l_m}}_{i_{max}}\right] \geq \left(1 -p_{l_m}\right)\mathds{P}\left[C^{\overline{l_m}}_{i}\right].
	\end{equation} 
	From (\ref{eq:p_c_imax_minus_p_c_i_x_1}) and (\ref{eq:ineq_x_1_p_cimax_ci}), it follows that $\mathds{P}\left[C_{i_{max}}\right] \geq \mathds{P}\left[C_{i}\right]$ holds in this case as well.
	
	\textit{Case 2b:} Suppose that $\boldsymbol{x}\left[l_m\right]$ is not identical for all $\boldsymbol{x} \in C_{i_{max}}$.
Let $C^{\overline{l_m},\, b}_{i_{max}}$ and $C^{\overline{l_m},\, b}_{i}$, $b \in \{0,1\}$, be subsets of $C^{\overline{l_m}}_{i_{max}}$ and $C^{\overline{l_m}}_{i}$, respectively, obtained from $C^{l_m, b}_{i_{max}}$ and $C^{l_m, b}_{i}$ by deleting the $l_0$ coordinate in the element vectors.
We can express the probabilities of sets $C^{\overline{l_m}}_{i_{max}}$ and $C^{\overline{l_m}}_{i}$ as  $\mathds{P}\left[C^{\overline{l_m}}_{i_{max}}\right] = \mathds{P}\left[C^{\overline{l_m},0}_{i_{max}}\right] + \mathds{P}\left[C^{\overline{l_m},1}_{i_{max}}\right]$ and $\mathds{P}\left[C^{\overline{l_m}}_{i}\right] = \mathds{P}\left[C^{\overline{l_m},0}_{i}\right] + \mathds{P}\left[C^{\overline{l_m},1}_{i}\right]$, respectively, and rewrite (\ref{eq:c_lm_imax_ineq_c_lm_i}) as
\begin{multline}\label{ineq:second_ind_hyp_x_not_eq}
		\mathds{P}\left[C^{\overline{l_m}}_{i_{max}}\right] = \mathds{P}\left[C^{\overline{l_m},0}_{i_{max}}\right] + \mathds{P}\left[C^{\overline{l_m},1}_{i_{max}}\right]\\ \geq \mathds{P}\left[C^{\overline{l_m}}_{i}\right] = \mathds{P}\left[C^{\overline{l_m},0}_{i}\right] + \mathds{P}\left[C^{\overline{l_m},1}_{i}\right].
	\end{multline}
	The probabilities $\mathds{P}\left[C_{i_{max}}\right]$ and $\mathds{P}\left[C_{i}\right]$ can be expressed as
	\begin{multline}\label{eq:p_cimax_x_not_equal}
		\mathds{P}\left[C_{i_{max}}\right] = \mathds{P}\left[C^{l_m, 0}_{i_{max}}\right] + \mathds{P}\left[C^{l_m, 1}_{i_{max}}\right] =\\ p_{l_m}\mathds{P}\left[C^{\overline{l_m},1}_{i_{max}}\right] + \left(1 -p_{l_m}\right)\mathds{P}\left[C^{\overline{l_m},0}_{i_{max}}\right] 
	\end{multline}
	and
	\begin{multline}\label{eq:p_ci_x_not_equal}
		\mathds{P}\left[C_{i}\right] = \mathds{P}\left[C^{l_m, 0}_{i}\right] + \mathds{P}\left[C^{l_m, 1}_{i}\right] =\\ p_{l_m}\mathds{P}\left[C^{\overline{l_m},1}_{i}\right] + \left(1 -p_{l_m}\right)\mathds{P}\left[C^{\overline{l_m},0}_{i}\right].
	\end{multline}
	By subtracting (\ref{eq:p_ci_x_not_equal}) from (\ref{eq:p_cimax_x_not_equal}), we obtain
	\begin{multline}\label{eq:p_cimax_min_p_ci}
		\mathds{P}\left[C_{i_{max}}\right] - \mathds{P}\left[C_{i}\right] = \\
		\left(1 -p_{l_m}\right)\left(\mathds{P}\left[C^{\overline{l_m},0}_{i_{max}}\right] - \mathds{P}\left[C^{\overline{l_m},0}_{i}\right]\right)\\ - p_{l_m}\left(\mathds{P}\left[C^{\overline{l_m},1}_{i}\right] - \mathds{P}\left[C^{\overline{l_m},1}_{i_{max}}\right]\right). 
	\end{multline}
	
	\textit{Subcase 2.1b:} First, suppose that $\boldsymbol{x_{max}}[l_m] = 0$, i.e., $\boldsymbol{x_{max}} \in C^{l_m, 0}_{i_{max}}$.
	This implies $0 \leq p_{l_m} < 0.5$ and $\left(1 - p_{l_m}\right) > p_{l_m}$.
	By multiplying both sides of (\ref{ineq:second_ind_hyp_x_not_eq}) by $\left(1 - p_{l_m}\right)$ and rearranging the terms, we have
	\begin{multline}\label{ineq:peniultimum_thm}
		\left(1 - p_{l_m}\right)\left(\mathds{P}\left[C^{\overline{l_m},0}_{i_{max}}\right] - \mathds{P}\left[C^{\overline{l_m},0}_{i}\right]\right) \geq \\ \left(1 - p_{l_m}\right)\left(\mathds{P}\left[C^{\overline{l_m},1}_{i}\right] - \mathds{P}\left[C^{\overline{l_m},1}_{i_{max}}\right]\right) > \\
		p_{l_m}\left(\mathds{P}\left[C^{\overline{l_m},1}_{i}\right] - \mathds{P}\left[C^{\overline{l_m},1}_{i_{max}}\right]\right),
	\end{multline}
	where the last inequality comes from  $\left(1 - p_{l_m}\right) > p_{l_m}$.
	Thus, from (\ref{ineq:peniultimum_thm}) and (\ref{eq:p_cimax_min_p_ci}), we can see that $\mathds{P}\left[C_{i_{max}}\right] > \mathds{P}\left[C_{i}\right]$ holds.
	
	\textit{Subcase 2.2b:} Finally, suppose that $\boldsymbol{x_{max}}[l_m] = 1$, i.e., $\boldsymbol{x_{max}} \in C^{l_m, 1}_{i_{max}}$.
	This implies $0.5 \leq p_{l_m} \leq 1$ and $p_{l_m} \geq \left(1 - p_{l_m}\right)$.
	By multiplying both sides of (\ref{ineq:second_ind_hyp_x_not_eq}) by $p_{l_m}$ and rearranging the terms, we have
	\begin{multline}\label{ineq:ultimum_thm}
		p_{l_m}\left(\mathds{P}\left[C^{\overline{l_m},1}_{i_{max}}\right] - \mathds{P}\left[C^{\overline{l_m},1}_{i}\right]\right) \geq \\ p_{l_m}\left(\mathds{P}\left[C^{\overline{l_m},0}_{i}\right] - \mathds{P}\left[C^{\overline{l_m},0}_{i_{max}}\right]\right) \geq \\
		\left(1 -p_{l_m}\right)\left(\mathds{P}\left[C^{\overline{l_m},0}_{i}\right] - \mathds{P}\left[C^{\overline{l_m},0}_{i_{max}}\right]\right),
	\end{multline}
	where the last inequality comes from  $p_{l_m} \geq \left(1 - p_{l_m}\right)$.
	By again rearranging the terms in the first and the last line of the inequality (\ref{ineq:ultimum_thm}), we get the inequality
	\begin{multline}\label{ineq:ultimum_rear_thm}
		\left(1 -p_{l_m}\right)\left(\mathds{P}\left[C^{\overline{l_m},0}_{i_{max}}\right] - \mathds{P}\left[C^{\overline{l_m},0}_{i}\right]\right)\\ \geq p_{l_m}\left(\mathds{P}\left[C^{\overline{l_m},1}_{i}\right] - \mathds{P}\left[C^{\overline{l_m},1}_{i_{max}}\right]\right) .
	\end{multline}
	From (\ref{eq:p_cimax_min_p_ci}) and (\ref{ineq:ultimum_rear_thm}), it directly follows $\mathds{P}\left[C_{i_{max}}\right] \geq \mathds{P}\left[C_{i}\right]$.
	
	By the principle of double induction, the theorem is true for all binary linear codes of any dimension $k$, $0 \leq k < n$, and all Hamming weights of their coset leaders HW$\left(\boldsymbol{e_{l}}\right) \geq 1$.
	
\end{proof}

The results of the coset-coset inequality theorem will be helpful in determining the exact output min-entropy of the linear corrector when the distributions of all raw input bits are precisely known.
For most real-world TRNGs, these distributions are unknown during the design time and vary, in some range, between TRNG instances and during the operation.
Often, the only thing that can be guaranteed and required by the standardization bodies \cite{Turan2018, AIS31,AIS31New} is the lower bound on entropy. 
Hence, to practically apply the finding of Theorem~\ref{thm:generalized_Sullivan}, that the most probable coset is the one that contains the most probable vector, we will use it in the following lemma to show how this probability can be bounded.

\begin{lemma}
	\label{lemma:helper}
	Let $C_0\left(p_0,\dots,p_{n-1}\right)$ be the set of codewords of a binary linear code and $C_{i_{max}}\left(p_0,\dots,p_{n-1}\right)$ be the most probable set as defined in Theorem~\ref{thm:generalized_Sullivan} with corresponding coordinate 1-probabilities given by tuple $\left(p_0,\dots,p_{n-1}\right)$, where all $p_i$ might be different.
Let $\delta_{max} = \text{max}\left\{\left|0.5 - p_i\right|\right\}_{i=0}^{n-1}$ be the maximum coordinate bit bias, and let $\left(0.5 - \delta_{max},\dots,0.5 - \delta_{max}\right)$ represent a tuple of coordinate 1-probabilities all equal to $0.5 - \delta_{max}$.
	Then, it holds $\mathds{P}\left[C_{0}\left(0.5 - \delta_{max},\dots,0.5 - \delta_{max}\right)\right] \geq \mathds{P}\left[C_{i_{max}}\left(p_0,\dots,p_{n-1}\right)\right]$.
\end{lemma}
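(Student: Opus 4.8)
The plan is to reach the bound in two moves: a coordinate-complementation step that rewrites $\mathds{P}\left[C_{i_{max}}\left(p_0,\dots,p_{n-1}\right)\right]$ as the probability of the \emph{code itself} under a ``folded'' tuple of $1$-probabilities all lying in $\left[0.5-\delta_{max},\,0.5\right]$, followed by a monotonicity step that pushes every such probability down to $0.5-\delta_{max}$ without ever decreasing the probability of $C_0$. Both steps rely on Theorem~\ref{thm:generalized_Sullivan}, the second one only as a black box.

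For the first step I would use the elementary observation that complementing coordinate $i$ (flipping the $i$-th bit of every vector) maps any set $S$ onto the set of bit-flipped vectors and leaves the set probability unchanged provided $p_i$ is simultaneously replaced by $1-p_i$. Performing this complementation on all coordinates $j$ with $\boldsymbol{x_{max}}\left[j\right]=1$ (equivalently $p_j\ge 0.5$) amounts to adding $\boldsymbol{x_{max}}$ to every vector of the standard array and replaces each $p_j$ by $q_j:=\min\{p_j,\,1-p_j\}=0.5-\left|0.5-p_j\right|$. Since $\boldsymbol{x_{max}}\in C_{i_{max}}$ by definition, the coset $C_{i_{max}}$ is carried to $C_{i_{max}}\oplus\boldsymbol{x_{max}}$, which — being a coset of $C_0$ that contains $\boldsymbol{0}$ — equals $C_0$ itself. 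Hence $\mathds{P}\left[C_{i_{max}}\left(p_0,\dots,p_{n-1}\right)\right]=\mathds{P}\left[C_0\left(q_0,\dots,q_{n-1}\right)\right]$, and each $q_j$ lies in $\left[0.5-\delta_{max},\,0.5\right]$ by the definition of $\delta_{max}$ as the largest coordinate bias.

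For the second step it suffices to show that $\mathds{P}\left[C_0\left(\cdot\right)\right]$ is non-increasing in each coordinate $1$-probability as long as every coordinate $1$-probability stays in $\left[0,0.5\right]$; lowering $q_0,\dots,q_{n-1}$ to $0.5-\delta_{max}$ one coordinate at a time (all intermediate values remaining in $\left[0.5-\delta_{max},0.5\right]$) then gives $\mathds{P}\left[C_0\left(q_0,\dots,q_{n-1}\right)\right]\le \mathds{P}\left[C_0\left(0.5-\delta_{max},\dots,0.5-\delta_{max}\right)\right]$, which combined with the first step proves the lemma. To establish the monotonicity, fix a coordinate $i$, split $C_0$ according to the value in that coordinate, and write $\mathds{P}\left[C_0\right]=\left(1-r_i\right)A+r_i B$ with $A,B\ge 0$ independent of $r_i$. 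Using the standard fact that any coordinate of a binary linear code is either identically $0$ or balanced (together with Lemma~\ref{lemma:subgroup_order}), the deleted-coordinate image of the codewords that are $0$ in coordinate $i$ is a binary linear code $C_0'$, $A$ is its probability under the remaining coordinates, and $B$ is either $0$ (if coordinate $i$ is identically $0$ on $C_0$) or the probability of one of the cosets of $C_0'$. Since the remaining coordinate $1$-probabilities are all $\le 0.5$, Theorem~\ref{thm:generalized_Sullivan} applied to $C_0'$ says its most probable set is $C_0'$ itself, so $A\ge B$; thus the coefficient $B-A$ of $r_i$ is non-positive and decreasing $r_i$ cannot decrease $\mathds{P}\left[C_0\right]$.

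The conceptual core is already packaged in Theorem~\ref{thm:generalized_Sullivan}; the genuinely new ingredient, and the step I expect to be the crux, is the identity $C_{i_{max}}\oplus\boldsymbol{x_{max}}=C_0$ that lets the asymmetric problem collapse onto the code itself. The remaining care is with boundary cases: if $\delta_{max}=0$ the statement is trivial since all sets then have probability $2^{k-n}$; and if some individual $q_j$ equals exactly $0.5$, one notes that such a coordinate multiplies every vector's probability by the same factor $1/2$ and therefore does not influence which set is most probable, so Theorem~\ref{thm:generalized_Sullivan}'s conclusion still applies even though its $\boldsymbol{x_{max}}$ carries a $1$ there (alternatively, invoke continuity of $\mathds{P}\left[C_0\left(\cdot\right)\right]$ in that coordinate and pass to the limit from values below $0.5$).
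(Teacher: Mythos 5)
Your proof is correct, and its first half takes a genuinely different---and cleaner---route than the paper's. The paper never invokes the complementation symmetry $\mathds{P}\left[S\left(\dots,p_j,\dots\right)\right]=\mathds{P}\left[\left(S\oplus\boldsymbol{1_j}\right)\left(\dots,1-p_j,\dots\right)\right]$; instead, its Case~2 lowers each $p_{l_j}\geq 0.5$ to $0.5-\delta_{max}$ one coordinate at a time and proves an inequality at every step, which forces it to track how the most probable coset migrates from $C_{i_{max}}$ to $C_{i_{max},l_0}$ (the identity (\ref{eq:1_l_0}), the sub-subcases on whether $\boldsymbol{1_{l_0}}\in C_0$, and so on). Your folding step replaces that entire analysis with the exact identity $\mathds{P}\left[C_{i_{max}}\left(p_0,\dots,p_{n-1}\right)\right]=\mathds{P}\left[C_0\left(q_0,\dots,q_{n-1}\right)\right]$ with $q_j=0.5-\left|0.5-p_j\right|$, losing nothing and collapsing the asymmetric problem onto the code in one move. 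Your second step is essentially the paper's Case~1: the decomposition $\mathds{P}\left[C_0\right]=A+r_i\left(B-A\right)$ with $A\geq B$ via Theorem~\ref{thm:generalized_Sullivan} applied after puncturing is exactly the computation in (\ref{eq:c_0_sub}) together with its Subcases 1.1, 1.2.1 and 1.2.2 (your ``identically $0$'', ``$\boldsymbol{1_{l_0}}\in C_0$'' and generic cases match theirs one for one). Two minor remarks: citing Lemma~\ref{lemma:subgroup_order} in the monotonicity step is not quite the right tool---that lemma concerns deleting coordinates where a coset leader is $1$---but the fact you actually need (the coordinate-$i$-zero subcode projects to a linear code and the complementary half to one of its cosets) is the elementary ``standard fact'' you already state, so nothing is missing. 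And your explicit treatment of the boundary $q_j=0.5$ is a genuine point of care that the paper glosses over, since Theorem~\ref{thm:generalized_Sullivan} assigns $\boldsymbol{x_{max}}\left[j\right]=1$ there; your common-factor (or continuity) argument is the honest way to make that application airtight.
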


\begin{proof}
	We will decompose the proof into two cases, depending on whether the most probable vector $\boldsymbol{x_{max}}$ is an all-zero vector, and prove both cases by simple induction.
	
	\textit{Case 1:} Suppose that the most probable vector $\boldsymbol{x_{max}}$ is the all-zero vector, i.e., all coordinate 1-probabilities $p_i$, $0 \leq i \leq n-1$, are lower than 0.5 and possibly different from each other. 
	According to Theorem~\ref{thm:generalized_Sullivan}, the most probable set will be $C_0$, i.e., $C_{i_{max}} = C_0$.
	If in some coordinate $l_0$, we change its probability $p_{l_0}$ to $p^*_{l_0} = 0.5 - \delta_{max}$, the all-zero vector will remain the most probable vector for the tuple of 1-probabilities $\left(p_0, \dots, p^*_{l_0}, \dots, p_{n-1} \right)$ and therefore $C_0$ remains the most probable set. 
We partition $C_0$ into two subsets $C^{l_0,\, b}_0 = \left\{\boldsymbol{x} \in C_{0} \mid \boldsymbol{x}[l_0] = b\right\}$, $b \in \{0,1\}$, according to the value of the element vectors' coordinate in $l_0$. 
We now remove the $l_0$ coordinate of each element in $C^{l_0,\, b}_0$ and obtain subsets $C^{\overline{l_0},\, b}_0$, $b \in \{0,1\}$.
	Note that $\mathds{P}\left[C^{\overline{l_0},\, b}_0 \left(p_0, \dots, p_{l_0}, \dots, p_{n-1}\right)\right] = \mathds{P}\left[C^{\overline{l_0},\, b}_0 \left(p_0, \dots, p^*_{l_0}, \dots, p_{n-1}\right)\right] = \mathds{P}\left[C^{\overline{l_0},\, b}_0\right]$, since the vectors in $C^{\overline{l_0},\, b}_0$ do not have coordinate $l_0$ with modified probability.
	The probability of $C_0$ before and after the $l_0$ coordinate probability change will be
	\begin{multline}\label{eq:p_c_0_orig}
		\mathds{P}\left[C_0 \left(p_0, \dots, p_{l_0}, \dots, p_{n-1}\right)\right] = \\
		p_{l_0}\mathds{P}\left[C^{\overline{l_0},1}_0\right] + \left(1 -p_{l_0}\right)\mathds{P}\left[C^{\overline{l_0},0}_0\right], 
	\end{multline}
	and
	\begin{multline}\label{eq:p_c_0_mod}
		\mathds{P}\left[C_0 \left(p_0, \dots, p^*_{l_0} = 0.5-\delta_{max}, \dots, p_{n-1}\right)\right] = \\ p^*_{l_0}\mathds{P}\left[C^{\overline{l_0},1}_0\right] + \left(1 -p^*_{l_0}\right)\mathds{P}\left[C^{\overline{l_0},0}_0\right] = \\ \left(0.5 - \delta_{max}\right)\mathds{P}\left[C^{\overline{l_0},1}_0\right] + \left(0.5 + \delta_{max}\right)\mathds{P}\left[C^{\overline{l_0},0}_0\right],
	\end{multline}
	respectively.
	By subtracting (\ref{eq:p_c_0_orig}) from (\ref{eq:p_c_0_mod}), we obtain
	\begin{multline}\label{eq:c_0_sub}
		\mathds{P}\left[C_0 \left(p_0, \dots, p^*_{l_0} = 0.5-\delta_{max}, \dots, p_{n-1}\right)\right] \\ - \mathds{P}\left[C_0 \left(p_0, \dots, p_{l_0}, \dots, p_{n-1}\right)\right] = \\ \left(0.5 - \delta_{max} - p_{l_0}\right)\mathds{P}\left[C^{\overline{l_0},1}_0\right] +  \left(p_{l_0} - 0.5 + \delta_{max}\right)\mathds{P}\left[C^{\overline{l_0},0}_0\right]\\ = \left(\delta_{max} - \left(0.5 - p_{l_0}\right)\right)\left(\mathds{P}\left[C^{\overline{l_0},0}_0\right] - \mathds{P}\left[C^{\overline{l_0},1}_0\right]\right).
	\end{multline}
	The first multiplication term in the last line of (\ref{eq:c_0_sub}) is non-negative since, by the definition of $\delta_{max}$, it holds $\delta_{max} \geq 0.5 - p_{l_0}$.

	\textit{Subcase 1.1:} If for all $\boldsymbol{x} \in C_0$ it holds $\boldsymbol{x}[l_0] = 0$, then $C^{l_0,0}_0 = C_0$ and $C^{l_0,1}_0 = C^{\overline{l_0},1}_0 = \varnothing$.
	This implies that the second multiplication term in (\ref{eq:c_0_sub}) is also non-negative, since $\mathds{P}\left[C^{\overline{l_0},1}_0\right] = 0$ and therefore $\mathds{P}\left[C_0 \left(p_0, \dots, p^*_{l_0} = 0.5-\delta_{max}, \dots, p_{n-1}\right)\right] \geq \mathds{P}\left[C_0 \left(p_0, \dots, p_{l_0}, \dots, p_{n-1}\right)\right]$.
	
	\textit{Subcase 1.2:} If $\boldsymbol{x}[l_0]$ is not identical for all $\boldsymbol{x} \in C_0$, we have two additional subcases depending on whether $C_0$ contains the vector element $\boldsymbol{1_{l_0}}$ -- an $n$-bit vector with Hamming weight 1 that has a 1 in coordinate $l_0$.
	
	\textit{Subsubcase 1.2.1:} If $\boldsymbol{1_{l_0}} \in C_0$, then
	every element $\boldsymbol{c}$ in $C^{l_0, 0}_{0}$ has exactly one corresponding element $\boldsymbol{c^{\prime}}$ in $C^{l_0,\, 1}_{0}$ to which it is related by $\boldsymbol{c^{\prime}} = \boldsymbol{c} \oplus \boldsymbol{1_{l_0}}$. 
	Then $C^{\overline{l_0},1}_0 = C^{\overline{l_0},0}_0$, and the second multiplication term in (\ref{eq:c_0_sub}) is 0.
	Hence, $\mathds{P}\left[C_0 \left(p_0, \dots, p^*_{l_0} = 0.5-\delta_{max}, \dots, p_{n-1}\right)\right] = \mathds{P}\left[C_0 \left(p_0, \dots, p_{l_0}, \dots, p_{n-1}\right)\right]$.

    \textit{Subsubcase 1.2.2:} Suppose that $\boldsymbol{1_{l_0}} \notin C_0$.
	Then by Theorem~\ref{thm:generalized_Sullivan}, $\mathds{P}\left[C^{\overline{l_0},0}_0\right] \geq \mathds{P}\left[C^{\overline{l_0},1}_0\right]$, since the set $C^{\overline{l_0},0}_0$ contains the ${\left(n-1\right)}$-bit all-zero vector and $C^{\overline{l_0},1}_0$ is its proper coset.
	Hence, the second multiplication term in the last line of (\ref{eq:c_0_sub}) is also non-negative and $\mathds{P}\left[C_0 \left(p_0, \dots, p^*_{l_0} = 0.5-\delta_{max}, \dots, p_{n-1}\right)\right] \geq \mathds{P}\left[C_0 \left(p_0, \dots, p_{l_0}, \dots, p_{n-1}\right)\right]$.

	By a trivial induction over coordinates $l_i$, $0 \leq i \leq n-1$, and iteratively applying the described coordinate probability substitution, one can easily arrive at the lemma's inequality for Case 1:
	\begin{equation}\label{eq:x_max_all_zero_main_ineq}
		\mathds{P}\left[C_0 \left(0.5-\delta_{max}, \dots, 0.5-\delta_{max}\right)\right] \geq \mathds{P}\left[C_0 \left(p_0, \dots, p_{n-1}\right)\right].
	\end{equation} 
	
	\textit{Case 2:} Suppose that the most probable vector $\boldsymbol{x_{max}}$ is not the all-zero vector, i.e., HW$\left(\boldsymbol{x_{max}}\right) = r \geq 1$ with 1-probabilities not smaller than 0.5 in coordinates ${l_0,\dots,l_{r-1}}$.
	If we change one of the coordinate 1-probabilities $p_{l_0}$, that was not smaller than 0.5, to $p^*_{l_0} = 0.5 - \delta_{max}$, the new most probable vector $\boldsymbol{x^{l_0, 0}_{max}}$ will be equal to $\boldsymbol{x_{max}}$ in all coordinates except $l_0$, in which $\boldsymbol{x^{l_0, 0}_{max}}$ has a 0.
	
	\textit{Subcase 2.1:} If $\boldsymbol{1_{l_0}} \in C_0$, then $\boldsymbol{x_{max}}$ and $\boldsymbol{x^{l_0, 0}_{max}}$ are in the same set $C_{i_{max}}$ since $\boldsymbol{x_{max}} = \boldsymbol{x^{l_0, 0}_{max}} \oplus \boldsymbol{1_{l_0}}$.
	All vectors in $C_{i_{max}}$ can be divided into two subsets $C^{l_0,\, b}_{i_{max}} = \left\{\boldsymbol{x} \in C_{i_{max}} \mid \boldsymbol{x}[l_0] = b\right\}$, $b \in \{0,1\}$.
	We now remove the coordinate $l_0$ of each element in $C^{l_0,\, b}_{i_{max}}$ to obtain subsets $C^{\overline{l_0},\, b}_{i_{max}}$, $b \in \{0,1\}$.
	Since every element $\boldsymbol{c_{i_{max}}}$ in $C^{l_0,\, b}_{i_{max}}$ has exactly one corresponding element $\boldsymbol{c^{\prime}_{i_{max}}}$ in $C^{l_0,1-b}_{i_{max}}$ to which it is related by $\boldsymbol{c^{\prime}_{i_{max}}} = \boldsymbol{c_{i_{max}}} \oplus  \boldsymbol{1_{l_0}}$, it is clear that $C^{\overline{l_0},\, b}_{i_{max}} = C^{\overline{l_0},1-b}_{i_{max}}$.
Then, for the probabilities of $C_{i_{max}}\left(p_0, \dots, p_{l_0}, \dots, p_{n-1}\right)$ and $C_{i_{max}}\left(p_0, \dots, p^*_{l_0}, \dots, p_{n-1}\right)$, we have
	\begin{multline}
		\mathds{P}\left[C_{i_{max}}\left(p_0, \dots, p_{l_0}, \dots, p_{n-1}\right)\right] = p_{l_0}\mathds{P}\left[C^{\overline{l_0},1}_{i_{max}}\right] \\  + \left(1 -p_{l_0}\right)\mathds{P}\left[C^{\overline{l_0},0}_{i_{max}}\right] = \mathds{P}\left[C^{\overline{l_0},0}_{i_{max}}\right] = \mathds{P}\left[C^{\overline{l_0},1}_{i_{max}}\right],
	\end{multline}
	and
	\begin{multline}
		\mathds{P}\left[C_{i_{max}}\left(p_0, \dots, p^*_{l_0}, \dots, p_{n-1}\right)\right] = p^*_{l_0}\mathds{P}\left[C^{\overline{l_0},1}_{i_{max}}\right] \\ + \left(1 -p^*_{l_0}\right)\mathds{P}\left[C^{\overline{l_0},0}_{i_{max}}\right] = \mathds{P}\left[C^{\overline{l_0},0}_{i_{max}}\right] = \mathds{P}\left[C^{\overline{l_0},1}_{i_{max}}\right],
	\end{multline} 
	respectively.
	Therefore, $\mathds{P}\left[C_{i_{max}}\left(p_0, \dots, p_{l_0}, \dots, p_{n-1}\right)\right] = \mathds{P}\left[C_{i_{max}}\left(p_0,\dots, p^*_{l_0} = 0.5-\delta_{max}, \dots,p_{n-1}\right)\right]$.
	
	\textit{Subcase 2.2:} If $\boldsymbol{1_{l_0}} \notin C_0$, then $\boldsymbol{x_{max}}$ and $\boldsymbol{x^{l_0, 0}_{max}}$ will be in different sets, which we denote by $C_{i_{max}}$ and $C_{i_{max}, l_0}$.
	Let $\boldsymbol{e_{x_{max}}}$ be a vector element of the lowest weight in $C_{i_{max}}$ and let $\boldsymbol{c_{0,x_{max}}}$ be the codeword such that $\boldsymbol{x_{max}} = \boldsymbol{e_{x_{max}}} \oplus \boldsymbol{c_{0,x_{max}}}$.
	Similarly, let $\boldsymbol{e_{x_{max}, l_0}}$ be a vector element of the lowest weight in $C_{i_{max}, l_0}$ and let $\boldsymbol{c_{0,x_{max},l_0}}$ be the codeword such that $\boldsymbol{x^{l_0, 0}_{max}} = \boldsymbol{e_{x_{max}, l_0}} \oplus \boldsymbol{c_{0,x_{max},l_0}}$.
	Since $\boldsymbol{x_{max}} = \boldsymbol{x^{l_0, 0}_{max}} \oplus \boldsymbol{1_{l_0}}$, it holds
	\begin{equation}\label{eq:1_l_0}
		\boldsymbol{1_{l_0}} = \boldsymbol{c_{0,x_{max}}} \oplus \boldsymbol{c_{0,x_{max},l_0}} \oplus \boldsymbol{e_{x_{max}}} \oplus \boldsymbol{e_{x_{max},l_0}}.
	\end{equation}
	Every element $\boldsymbol{e_{x_{max}}} \oplus \boldsymbol{c_{0,j}}$ from $C_{i_{max}}$ has one corresponding element in the set $C_{i_{max}, l_0}$ from which it differs only in the coordinate $l_0$:
	\begin{multline}\label{eq:rel_x_max_x_max_m_1_different}
		\boldsymbol{e_{x_{max}}} \oplus \boldsymbol{c_{0,j}} \oplus \boldsymbol{1_{l_0}} = \\ \boldsymbol{e_{x_{max}, l_0}} \oplus \boldsymbol{c_{0,j}} \oplus \boldsymbol{c_{0,x_{max}}} \oplus \boldsymbol{c_{0,x_{max}, l_0}}.
	\end{multline}
We partition $C_{i_{max}}$ into subsets $C^{l_0,\, b}_{i_{max}} = \left\{\boldsymbol{x} \in C_{i_{max}} \mid \boldsymbol{x}[l_0] = b\right\}$ and $C_{i_{max},\, l_0}$ into subsets $C^{l_0,\, b}_{i_{max},l_0} = \left\{\boldsymbol{x} \in C_{i_{max},\, l_0} \mid \boldsymbol{x}[l_0] = b\right\}$, $b \in \left\{0,1\right\}$, according to the value of the coordinate $l_0$.
	We also remove components in coordinate $l_0$ of each element in $C^{l_0,\, b}_{i_{max}}$ and $C^{l_0,\, b}_{i_{max},l_0}$, $b \in \left\{0,1\right\}$, and obtain $C^{\overline{l_0},\, b}_{i_{max}}$ and $C^{\overline{l_0},\, b}_{i_{max},l_0}$ , respectively.
	Due to (\ref{eq:rel_x_max_x_max_m_1_different}), the elements in $C_{i_{max}}$ and $C_{i_{max},\, l_0}$ differ only in the $l_0$ coordinate, and it follows $C^{\overline{l_0},\, b}_{i_{max}} = C^{\overline{l_0},1-b}_{i_{max},l_0}$.
	Then for the probabilities of $C_{i_{max}}\left(p_0, \dots, p_{l_0}, \dots, p_{n-1}\right)$ and $C_{i_{max},\, l_0}\left(p_0, \dots, p^*_{l_0}, \dots, p_{n-1}\right)$, we have
\begin{multline}\label{eq:p_cmax_m}
		\mathds{P}\left[C_{i_{max}}\left(p_0,\dots, p_{l_0}, \dots,p_{n-1}\right)\right] = \\ p_{l_0}\mathds{P}\left[C^{\overline{l_0},1}_{i_{max}}\right] + \left(1 -p_{l_0}\right)\mathds{P}\left[C^{\overline{l_0},0}_{i_{max}}\right], 
	\end{multline} 
	and
	\begin{multline}\label{eq:p_cmax_m_1}
		\mathds{P}\left[C_{i_{max},\, l_0}\left(p_0,\dots, p^*_{l_0}, \dots,p_{n-1}\right)\right] = \\ p^*_{l_0}\mathds{P}\left[C^{\overline{l_0},1}_{i_{max},l_0}\right] + \left(1 -p^*_{l_0}\right)\mathds{P}\left[C^{\overline{l_0},0}_{i_{max},l_0}\right] = \\ \left(0.5 + \delta_{max}\right)\mathds{P}\left[C^{\overline{l_0},1}_{i_{max}}\right] + \left(0.5 - \delta_{max}\right)\mathds{P}\left[C^{\overline{l_0},0}_{i_{max}}\right],
	\end{multline} 
	respectively.
	By subtracting (\ref{eq:p_cmax_m}) from (\ref{eq:p_cmax_m_1}), we obtain
	\begin{multline}\label{eq:c_max_m_1_c_max_m}
		\mathds{P}\left[C_{i_{max},\, l_0}\left(p_0,\dots, p^*_{l_0} = 0.5-\delta_{max}, \dots,p_{n-1}\right)\right] \\ - \mathds{P}\left[C_{i_{max}}\left(p_0,\dots, p_{l_0}, \dots,p_{n-1}\right)\right] = \\ \left(0.5 + \delta_{max} - p_{l_0}\right)\mathds{P}\left[C^{\overline{l_0},1}_{i_{max}}\right] +  \left(p_{l_0} - 0.5 - \delta_{max}\right)\mathds{P}\left[C^{\overline{l_0},0}_{i_{max}}\right]\\ = \left(\delta_{max} - \left(p_{l_0} - 0.5\right)\right)\left(\mathds{P}\left[C^{\overline{l_0},1}_{i_{max}}\right] - \mathds{P}\left[C^{\overline{l_0},0}_{i_{max}}\right]\right).
	\end{multline}
By the definition of $\delta_{max}$, it follows $\delta_{max} \geq p_{l_0} - 0.5$, and thus, the first multiplication term in the last line of (\ref{eq:c_max_m_1_c_max_m}) is non-negative.
	Since in this case $\boldsymbol{x_{max}}$ has a 1 in coordinate $l_0$, we have two possibilities for the vectors in $C_{i_{max}}$: either $\boldsymbol{x}[l_0] = 1$ holds for all $\boldsymbol{x} \in C_{i_{max}}$ or half of the vectors have a 0 and the other half have a 1 in coordinate $l_0$. 
	
	\textit{Subsubcase 2.2.1:} If $\boldsymbol{x}[l_0]=1$ holds for all $\boldsymbol{x} \in C_{i_{max}}$, then $C^{l_0,1}_{i_{max}} = C_{i_{max}}$ and $C^{l_0,0}_{i_{max}} = C^{\overline{l_0},0}_{i_{max}} = \varnothing$.
	Since $\mathds{P}\left[C^{\overline{l_0},0}_{i_{max}}\right]=0$, the second multiplication term in the last line of (\ref{eq:c_max_m_1_c_max_m}) is also non-negative and thus, $\mathds{P}\left[C_{i_{max},\, l_0}\left(p_0,\dots, p^*_{l_0} = 0.5-\delta_{max}, \dots,p_{n-1}\right)\right] \geq \mathds{P}\left[C_{i_{max}}\left(p_0,\dots, p_{l_0}, \dots,p_{n-1}\right)\right]$.
	
	\textit{Subsubcase 2.2.2:} If not all $\boldsymbol{x} \in C_{i_{max}}$ have identical bit in position $l_0$, then the most probable ${\left(n-1\right)}$-bit vector obtained from $\boldsymbol{x_{max}}$ by deleting its $l_0$ coordinate $\boldsymbol{x^{\overline{l_0}}_{max}}$ will be in $C^{\overline{l_0},1}_{i_{max}}$.
	By Theorem~\ref{thm:generalized_Sullivan}, we have $\mathds{P}\left[C^{\overline{l_0},1}_{i_{max}}\right] \geq \mathds{P}\left[C^{\overline{l_0},0}_{i_{max}}\right]$.
	Thus, the second multiplication term in the last line of (\ref{eq:c_max_m_1_c_max_m}) is also non-negative, and we have $\mathds{P}\left[C_{i_{max},\, l_0}\left(p_0,\dots, p^*_{l_0} = 0.5-\delta_{max}, \dots,p_{n-1}\right)\right] \geq \mathds{P}\left[C_{i_{max}}\left(p_0,\dots, p_{l_0}, \dots,p_{n-1}\right)\right]$.

	Similarly to Case 1, we can use trivial induction over all coordinates $l_i$, $0 \leq i \leq r-1$, in which $\boldsymbol{x_{max}}$ has value 1.
	By iteratively applying the described coordinate probability substitutions, we are saddled with the most probable vector $\boldsymbol{x^{l_0, 0;\dots;l_{r-1},0}_{max}}$, which is an all-zero vector since all ones are replaced by zeros and therefore $C_{i_{max},\, l_0,\dots, l_{r-1}} = C_0$.
	Hence, we arrive at the inequality:
	\begin{multline}
		\mathds{P}\left[C_{0}\left(p_0,\dots, p^*_{l_0}, \dots, p^*_{l_{r-1}}, \dots, p_{n-1}\right)\right] = \\
		\mathds{P}\left[C_{i_{max},\, l_0,\dots, l_{r-1}}\left(p_0,\dots, p^*_{l_0}, \dots, p^*_{l_{r-1}}, \dots, p_{n-1}\right)\right] \\ \geq
		\mathds{P}\left[C_{i_{max}}\left(p_0,\dots, p_{l_0}, \dots, p_{l_{r-1}}, \dots, p_{n-1}\right)\right],
	\end{multline}
	where $p^*_{l_0} = \dots = p^*_{l_{r-1}} = 0.5 - \delta_{max}$.
We can now apply the inequality (\ref{eq:x_max_all_zero_main_ineq}) from \textit{Case 1} when the all-zero is the most probable vector to obtain the lemma's inequality:
	\begin{multline}
		\mathds{P}\left[C_0 \left(0.5-\delta_{max}, \dots, 0.5-\delta_{max}\right)\right] \geq \\ \mathds{P}\left[C_{0}\left(p_0,\dots, p^*_{l_0}, \dots, p^*_{l_{r-1}}, \dots, p_{n-1}\right)\right] \geq \\
		\mathds{P}\left[C_{i_{max}}\left(p_0,\dots, p_{l_0}, \dots, p_{l_{r-1}}, \dots, p_{n-1}\right)\right].
	\end{multline}
	Since we have shown that the lemma's inequality is satisfied for all possible Hamming weights of the $\boldsymbol{x_{max}}$, this concludes the proof.

\end{proof}

We use the previous results from Theorem \ref{thm:generalized_Sullivan} and Lemma \ref{lemma:helper} for the main theorem that provides a lower bound on the min-entropy of the output of a linear corrector when only a lower bound on the min-entropy of the noise source of independent bits is known.
	
\begin{theorem}\label{thm:main_thm}
	Let $\mathcal{X}$ be a row vector $n$-bit random variable with independent but not necessarily identically distributed coordinates and let the min-entropy per bit of $\mathcal{X}$ be at least $\mathrm{H}^{in}_{\infty} > 0$.
	Let $\boldsymbol{G}$ be a $k \times n$ generator matrix of a binary linear $\left[n, k\right]$-code $C_0$ and let $\left(\mathfrak{A}_i\right)_{i=0}^n$ be its weight distribution.
Then, the total min-entropy of the output of the linear corrector $\mathcal{Y} = \boldsymbol{G}  \mathcal{X}^{\intercal}$ is lower-bounded by:
	\begin{multline}\label{eq:Hmin_pp}
\mathrm{H}^{out,\,tot}_{\infty} \geq - \log_{2} \left(2^{-k}\sum_{i=0}^{n}\mathfrak{A}_i \left(2^{1 -\mathrm{H}^{in}_{\infty}} - 1\right)^i \right).
	\end{multline}
\end{theorem}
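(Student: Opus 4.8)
The plan is to recast the output min-entropy as a statement about coset probabilities of the \emph{dual} code $C^{\bot}_0$, apply Theorem~\ref{thm:generalized_Sullivan} and Lemma~\ref{lemma:helper} to $C^{\bot}_0$, and then turn the resulting sum over the dual weight distribution back into a sum over $\left(\mathfrak{A}_i\right)_{i=0}^n$ by the MacWilliams identity.

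First I would observe that $\boldsymbol{x}\mapsto\boldsymbol{G}\boldsymbol{x}^{\intercal}$ is a surjection $\mathds{F}_2^n\to\mathds{F}_2^k$ whose kernel is exactly $C^{\bot}_0$, because a vector lies in the kernel iff it is orthogonal to every row of $\boldsymbol{G}$, i.e.\ to all of $C_0$. Hence for each output value $\boldsymbol{y}$ the preimage $\{\boldsymbol{x}\mid\boldsymbol{G}\boldsymbol{x}^{\intercal}=\boldsymbol{y}\}$ is a coset of $C^{\bot}_0$, and as $\boldsymbol{y}$ ranges over $\mathds{F}_2^k$ these are precisely the $2^k$ cosets of $C^{\bot}_0$. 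Since $\mathcal{X}$ has independent coordinates, $\mathds{P}\left[\mathcal{Y}=\boldsymbol{y}\right]$ equals the probability of the corresponding coset, so $\mathrm{H}^{out,\,tot}_{\infty}=-\log_2\max_{\boldsymbol{y}}\mathds{P}\left[\mathcal{Y}=\boldsymbol{y}\right]$ is $-\log_2$ of the largest coset probability of $C^{\bot}_0$.

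Next, applying Theorem~\ref{thm:generalized_Sullivan} to the $\left[n,n-k\right]$-code $C^{\bot}_0$ identifies the most probable coset as the one containing $\boldsymbol{x_{max}}$, and Lemma~\ref{lemma:helper} (again applied to $C^{\bot}_0$) bounds its probability by $\mathds{P}\left[C^{\bot}_0\left(q,\dots,q\right)\right]$, where $q=0.5-\delta_{max}$ and $\delta_{max}=\max_{0\le i\le n-1}\left|0.5-p_i\right|$ is the maximum coordinate bias. For an IID source with $1$-probability $q$ one has $\mathds{P}\left[C^{\bot}_0\left(q,\dots,q\right)\right]=\left(1-q\right)^n\sum_{i=0}^n\mathfrak{A}^{\bot}_i\left(\tfrac{q}{1-q}\right)^i$. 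Substituting $z=1-2q$ into the MacWilliams identity~\eqref{eq:macwilliams_id} — so that $\tfrac{1-z}{1+z}=\tfrac{q}{1-q}$ and $1+z=2\left(1-q\right)$ — and simplifying yields $\mathds{P}\left[C^{\bot}_0\left(q,\dots,q\right)\right]=2^{-k}\sum_{i=0}^n\mathfrak{A}_i\left(1-2q\right)^i=2^{-k}\sum_{i=0}^n\mathfrak{A}_i\left(2\delta_{max}\right)^i$.

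Finally, the per-bit min-entropy assumption gives $\max\left(p_i,1-p_i\right)=0.5+\left|0.5-p_i\right|\le 2^{-\mathrm{H}^{in}_{\infty}}$ for every $i$, hence $2\delta_{max}\le 2^{1-\mathrm{H}^{in}_{\infty}}-1$, a quantity that is non-negative since $\mathrm{H}^{in}_{\infty}\le 1$. Because all $\mathfrak{A}_i\ge 0$ and the base is non-negative, monotonicity gives $\sum_{i=0}^n\mathfrak{A}_i\left(2\delta_{max}\right)^i\le\sum_{i=0}^n\mathfrak{A}_i\left(2^{1-\mathrm{H}^{in}_{\infty}}-1\right)^i$; chaining the inequalities and taking $-\log_2$ produces~\eqref{eq:Hmin_pp}. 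The main obstacle I anticipate is not a single hard step but keeping the dual-code bookkeeping airtight: correctly pinning the preimages to cosets of $C^{\bot}_0$ rather than of $C_0$, picking the substitution in~\eqref{eq:macwilliams_id} that collapses the dual weight enumerator back onto $\left(\mathfrak{A}_i\right)_{i=0}^n$, and tracking that after Lemma~\ref{lemma:helper} it is $\delta_{max}$ — not any individual bias — that governs the remaining estimates.
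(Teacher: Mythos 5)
Your proposal is correct and follows essentially the same route as the paper: identify the output values with cosets of the dual code $C^{\bot}_0$, apply Theorem~\ref{thm:generalized_Sullivan} and Lemma~\ref{lemma:helper} to bound the largest coset probability by that of $C^{\bot}_0$ under the worst-case IID distribution, and convert the dual weight enumerator back to $\left(\mathfrak{A}_i\right)_{i=0}^n$ via the MacWilliams identity. The only cosmetic difference is that you substitute $z = 2\delta_{max}$ and finish with an explicit monotonicity step $2\delta_{max} \leq 2^{1-\mathrm{H}^{in}_{\infty}}-1$, whereas the paper substitutes $z = 2^{1-\mathrm{H}^{in}_{\infty}}-1$ directly.
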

	
\begin{proof}
	The proof will be a straightforward application of Theorem~\ref{thm:generalized_Sullivan} and Lemma~\ref{lemma:helper}.
	Consider the min-entropy of the $i$-th bit of $\mathcal{X}$ with 1-probability $0 < p_i < 1$:
	\begin{multline}
		\mathrm{H}^{in,i}_{\infty} = -\log_{2}\left(\text{max}\left\{p_i,\, 1-p_i\right\}\right) =\\ -\log_{2}\left(0.5 + \left|0.5 - p_i\right|\right).
	\end{multline}
	If we also denote the maximal bit bias with $\delta_{max} = \text{max}\left\{\left|0.5 - p_i\right|\right\}_{i=0}^{n-1}$, then the lower bound on the min-entropy per bit of $\mathcal{X}$ is simply given by
	\begin{multline}\label{eq:hin_delta_max}
		\mathrm{H}^{in}_{\infty} = \text{min}\left\{\mathrm{H}^{in,i}_{\infty}\right\}^{n-1}_{i=0} =\\ -\log_{2}\left(0.5 + \text{max}\left\{\left|0.5 - p_i\right|\right\}^{n-1}_{i=0} \right) =\\ - \log_{2}\left(0.5 + \delta_{max}\right).
	\end{multline}
By definition of the linear corrector $\mathcal{Y} = \boldsymbol{G}  \mathcal{X}^{\intercal}$ and the fact that the generator matrix $\boldsymbol{G}$ of the $C_0$ code is equivalent to the parity-check matrix of its dual code $C^{\bot}_0$, every $k$-bit output $\boldsymbol{y}$ of the linear corrector will be a syndrome for the dual code $C^{\bot}_0$ of an $n$-bit vector $\boldsymbol{x}$ which is a realization  of $\mathcal{X}$.
	Since all $n$-bit vectors belonging to the same coset of $C^{\bot}_0$ have the same syndrome, determining the probability of each output is equivalent to determining the probability of the corresponding coset of $C^{\bot}_0$.	
By Theorem~\ref{thm:generalized_Sullivan}, the most probable output will correspond to the syndrome of the most probable input vector.
	From (\ref{eq:hin_delta_max}) and the Lemma~\ref{lemma:helper}, it holds
	\begin{multline}\label{ineq:p_c_0_p_c_imax_main_thm}
		\underset{y \in \mathcal{Y}}{\text{max}}\, \mathds{P}\left[\mathcal{Y} = y\right] = \mathds{P}\left[C^{\bot}_{i_{max}}\left(p_0, \dots, p_{n-1}\right)\right] \leq \\ \mathds{P}\left[C^{\bot}_0\left(p = 1 - 2^{-\mathrm{H}^{in}_{\infty}}, \dots, p = 1 - 2^{-\mathrm{H}^{in}_{\infty}}\right)\right],
	\end{multline}
where $C^{\bot}_{i_{max}}$ is a coset of $C^{\bot}_0$ that contains the most probable vector $\boldsymbol{x}$.
	Since the number of vectors of $C^{\bot}_0$ with Hamming weight $i$ is given by its weight distribution $\left(\mathfrak{A}^{\bot}_i\right)_{i=0}^n$, we can determine the lower bound of the total output min-entropy as
\begin{multline}\label{ineq:tot_out_min_ent_last_main_thm}
		\mathrm{H}^{out,\,tot}_{\infty} = - \log_{2}\left(\underset{y \in \mathcal{Y}}{\text{max}}\, \mathds{P}\left[\mathcal{Y} = y\right]\right)  \geq \\  - \log_{2} \left(\sum_{i=0}^{n}\mathfrak{A}^{\bot}_i\, p^{i} \left(1 - p\right)^{n-i} \right)\\ =   - \log_{2} \left(2^{-n \mathrm{H}^{in}_{\infty}}\sum_{i=0}^{n}\mathfrak{A}^{\bot}_i \left(2^{\mathrm{H}^{in}_{\infty}} - 1\right)^i \right).
	\end{multline}
	By substituting $2^{1 - \mathrm{H}^{in}_{\infty}} - 1$ for $z$ in the MacWilliams identity (\ref{eq:macwilliams_id}), we have
	\begin{equation}
  \sum_{i=0}^{n}\mathfrak{A}^{\bot}_i \left(2^{\mathrm{H}^{in}_{\infty}} - 1\right)^i = 2^{n \mathrm{H}^{in}_{\infty}} 2^{-k}\sum_{i=0}^{n}\mathfrak{A}_i \left(2^{1 -\mathrm{H}^{in}_{\infty}} - 1\right)^i,
	\end{equation}
	and thus the theorem follows.

\end{proof}
	
According to Lemma~\ref{lemma:helper}, our new bound (\ref{eq:Hmin_pp}) is tight when independent input bits are not identically distributed and only the lower bound on the input min-entropy is known, and it is met with equality when independent input bits are identically distributed with $p < 0.5$.
In addition, thanks to Theorem~\ref{thm:generalized_Sullivan}, it is possible to determine the value and probability of the linear corrector's most probable output when the distributions of the input bits are precisely known.
Since $\sum_{i=d}^{n}\mathfrak{A}_i = 2^{k} - 1$ and $\left(2^{1 - \mathrm{H}^{in}_{\infty}} - 1\right)^i \leq \left(2^{1 - \mathrm{H}^{in}_{\infty}} - 1\right)^d$ for $i \geq d$, it is straightforward to show that the lower bound from Theorem~\ref{thm:main_thm} is always tighter than the overly conservative state-of-the-art bound given by (\ref{eq:Lacharme_ineq_mod}):
\begin{multline}\label{eq:tight_proof}
    - \log_{2} \left(2^{-k}\sum_{i=0}^{n}\mathfrak{A}_i \left(2^{1 - \mathrm{H}^{in}_{\infty}} - 1\right)^i\right) =\\ - \log_{2} \left(2^{-k} + 2^{-k}\sum_{i=d}^{n}\mathfrak{A}_i \left(2^{1 - \mathrm{H}^{in}_{\infty}} - 1\right)^i\right) \geq \\ - \log_{2} \left(2^{-k} + 2^{-k} \left(2^{k} - 1\right) \left(2^{1 - \mathrm{H}^{in}_{\infty}} - 1\right)^d\right) > \\- \log_{2} \left(2^{-k} + \left(2^{1 - \mathrm{H}^{in}_{\infty}} - 1\right)^d\right).
\end{multline}

Finally, it is worth mentioning, as pointed out by one of the reviewers, that the results presented in this section can alternatively be obtained using established Fourier techniques outlined in the works of Redinbo \cite{Redinbo1973} and Meneghetti \cite{ThesisMeneghetti2017}.

 \section{Selection of the Linear Correctors}
\label{Sec:Optimal_Selection}

Improvement of the new bound over the old one given by (\ref{eq:Lacharme_ineq_mod}) varies depending on the corrector's underlying code for which the bounds are calculated.
From (\ref{eq:Lacharme_ineq_mod}), it can be observed that for identical $\mathrm{H}^{in}_{\infty}$ and fixed corrector length $n$ and dimension $k$, the total output entropy is largest for the corrector based on a code with the greatest possible minimum distance $d$.
Linear codes that achieve the greatest minimum distance among all known $\left[n,k\right]$-codes are called the \textit{best known linear codes} (BKLCs) \cite{MAGMA, Grassl_codetables}.
On the other hand, it is clear from (\ref{eq:Hmin_pp}) that the relationship between $\mathrm{H}^{in}_{\infty}$ and $\mathrm{H}^{out,\, tot}_\infty$ is more complex and the codes' complete weight distribution should be considered. 
However, computing the weight distribution of a general binary linear code is an NP-hard problem \cite{Berlekamp78} and requires a significant computing effort for codes with high dimensions and high differences between the length and dimension.
In this section, we first calculate the new bound for the correctors based on the codes from the set of linear codes whose weight distributions can be conveniently determined or already available in the literature.
We then outline the process of selecting the optimal corrector for a given min-entropy rate of raw bits that maximizes the throughput of post-processed bits while maintaining the desired security level.
To demonstrate the practical advantages of our new bound, we compare the efficiencies and output min-entropies of correctors selected using the new bound against those selected using the old one.

\subsection{Optimal Extracting Linear Correctors}
\label{subsec:Optimal_Extracting_Correctors}

Both large output min-entropy and low throughput reduction are desirable corrector's properties.
Most security applications and standards \cite{Turan2018, AIS31, AIS31New} specify the output entropy requirements in terms of the min-entropy per bit $\mathrm{H}^{out,\, 1}_{\infty}$. 
To conservatively guarantee the entropy rate $\mathrm{H}^{out,\, 1}_{\infty}$ for every output bit, we require the total output min-entropy to be at least $\mathrm{H}^{out,\, tot}_\infty = k - 1 + \mathrm{H}^{out,\, 1}_\infty$. 
This requirement is more strict than $\mathrm{H}^{out,\, tot}_\infty = k \mathrm{H}^{out,\, 1}_\infty$, which would only guarantee the average min-entropy rate $\mathrm{H}^{out,\, 1}_{\infty}$ across all output bits, while the min-entropy of individual bits might be lower.
Since the throughput reduction is equal to the inverse of the underlying code's rate, a corrector based on a linear code is \textit{optimal extracting} if there are no codes in the considered set with simultaneously higher code rate $\nicefrac{k}{n}$ and a lower or equal required $\mathrm{H}^{in}_\infty$ to achieve $\mathrm{H}^{out,\, 1}_{\infty}$.
We denote this value of $\mathrm{H}^{in}_\infty$ as $\mathrm{H}^{in,\, req}_\infty$.
Post-processing of the raw bits with some specific (targeted) min-entropy rate is performed by selecting an optimal extracting corrector whose $\mathrm{H}^{in,\, req}_\infty$ is closest to the targeted min-entropy from below. 
By doing so, $\mathrm{H}^{out,\, 1}_\infty$ can be obtained at the corrector's output with the lowest possible throughput reduction.

\begin{figure}[!t]
	\centering
	\includegraphics{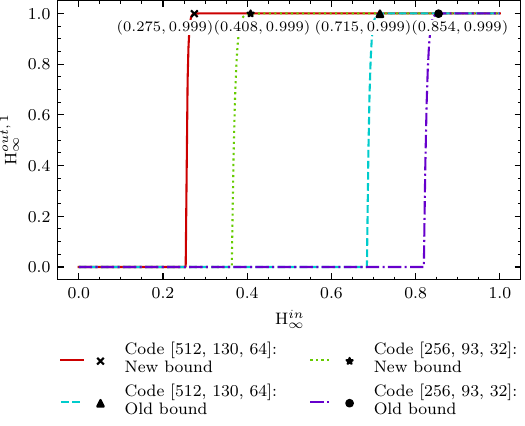}
	\caption{Relation between input and output min-entropy rate according to both old and new bounds for Reed-Muller $\left[512, 130, 64\right]$ and $\left[256, 93, 32\right]$ code-based correctors. The output min-entropy rate is computed as $\mathrm{H}^{out,\, 1}_\infty = \mathrm{max} \left( \mathrm{H}^{out,\, tot}_\infty - k + 1,\, 0 \right)$, where $\mathrm{H}^{out,\, tot}_\infty = f\left(\mathrm{H}^{in}_{\infty}\right)$ is determined for both the old and the new bound. All min-entropy values are rounded to three decimals.}
	\label{fig:Max_entropy_diff}
\end{figure}

\begin{figure*}[t]
	\centering
	\subfloat[\hspace*{-4em} \label{fig:a_Optimal_selection_fig_0.999}]{\includegraphics{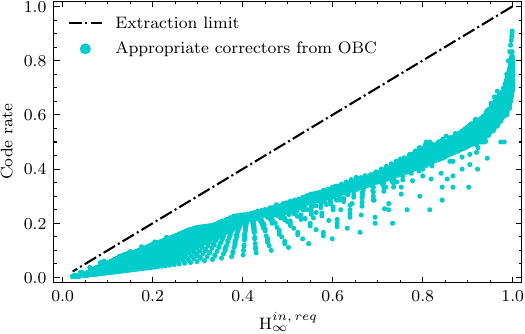}}
	\hfill
	\subfloat[\hspace*{-4em} \label{fig:b_Optimal_selection_fig_0.999}]{\includegraphics{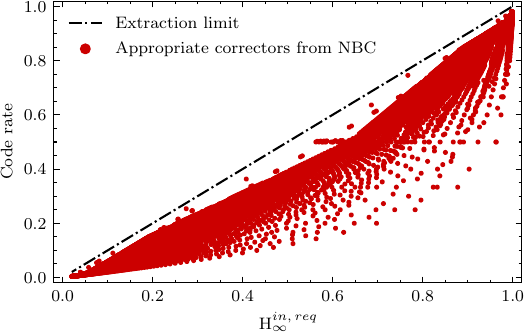}}
	\hfill
	\subfloat[\hspace*{-4em} \label{fig:c_Optimal_selection_fig_0.999}]{\includegraphics{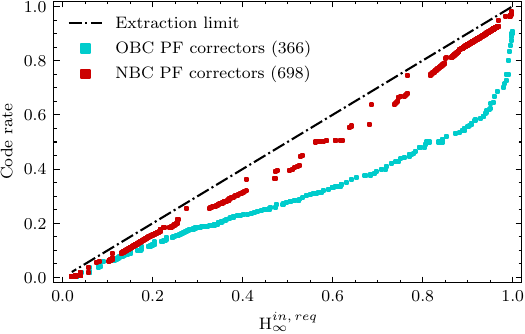}}
	\hfill
	\subfloat[\hspace*{-4em} \label{fig:d_Optimal_selection_fig_0.999}]{\includegraphics{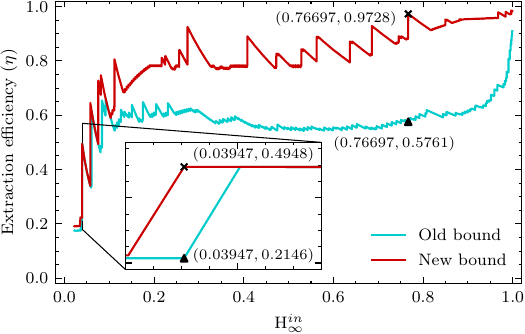}}
	\caption{Performances of linear correctors from OBC and NBC for $\mathrm{H}^{out,\, 1}_\infty \geq 0.999$ and extraction efficiency according to the old and the new bound.} 
	\label{fig:Optimal_selection_fig_0.999}
\end{figure*}

\begin{table}[t!]
	\caption{Optimal Linear Correctors and Performances for $\mathrm{H}^{out,\, 1}_{\infty} \geq 0.999$}
	\label{tab:PF_codes_0.999}
	\setlength\tabcolsep{2pt}
	\setlength\extrarowheight{2pt} 
	\begin{threeparttable}
		\begin{tabularx}{\linewidth}{@{\extracolsep{4pt}} *{5}{C} @{}}
			\toprule
			\multirow{3}{*}{Target $\mathrm{H}^{in}_{\infty}$} 
			& \multicolumn{2}{c}{Corrector construction} & \multicolumn{2}{c}{Extraction efficiency ($\eta$)} \\ 
			\cline{2-3} \cline{4-5} & OBC & NBC & Old bound & New bound\\
			\midrule
			\multirow{2}{*}{$0.1$}   & \multirow{2}{*}{$\left[511, 31, 219\right]$\tnote{a}}    & \multirow{2}{*}{$\left[511, 31, 219\right]$\tnote{a}}  & \multirow{2}{*}{$0.60665234$}    & $0.60665360$ $\left(+0.0002\% \right)$ \\ 
			\multirow{2}{*}{$0.2$}   & \multirow{2}{*}{$\left[254, 31, 96\right]$\tnote{*}}    & \multirow{2}{*}{$\left[243, 38, 83\right]$\tnote{*}}  & \multirow{2}{*}{$0.61022$}    & $0.78187$ $\left(+28.13\% \right)$ \\ 
			\multirow{2}{*}{$0.3$}   & \multirow{2}{*}{$\left[255, 47, 85\right]$\tnote{a,b}}    & \multirow{2}{*}{$\left[512, 130, 64\right]$\tnote{c}}  & \multirow{2}{*}{$0.61437$}    & $0.84635$ $\left(+37.76\% \right)$ \\ 
			\multirow{2}{*}{$0.4$}   & \multirow{2}{*}{$\left[126, 29, 42\right]$\tnote{*}}    & \multirow{2}{*}{$\left[122, 38, 31\right]$\tnote{*}}  & \multirow{2}{*}{$0.57538$}    & $0.77867$ $\left(+35.33\% \right)$ \\ 
			\multirow{2}{*}{$0.5$}   & \multirow{2}{*}{$\left[127, 35, 36\right]$\tnote{*}}    & \multirow{2}{*}{$\left[127, 50, 27\right]$\tnote{a,b}}  & \multirow{2}{*}{$0.55117$}    & $0.78740$ $\left(+42.86\% \right)$ \\ 
			\multirow{2}{*}{$0.6$}  & \multirow{2}{*}{$\left[87, 29, 24\right]$\tnote{*}}    & \multirow{2}{*}{$\left[127, 64, 19\right]$\tnote{d}}  & \multirow{2}{*}{$0.55554$}    & $0.83989$ $\left(+51.19\% \right)$ \\ 
			\multirow{2}{*}{$0.7$}   & \multirow{2}{*}{$\left[59, 23, 16\right]$\tnote{*}}    & \multirow{2}{*}{$\left[256, 163, 16\right]$\tnote{c}}  & \multirow{2}{*}{$0.55688$}    & $0.90960$ $\left(+63.34\% \right)$ \\ 
			\multirow{2}{*}{$0.8$}   & \multirow{2}{*}{$\left[46, 22, 12\right]$\tnote{*}}    & \multirow{2}{*}{$\left[512, 382, 16\right]$\tnote{c}}  & \multirow{2}{*}{$0.59781$}    & $0.93262$ $\left(+56.01\% \right)$ \\ 
			\multirow{2}{*}{$0.9$}   & \multirow{2}{*}{$\left[63, 35, 12\right]$\tnote{*}}    & \multirow{2}{*}{$\left[255, 219, 10\right]$\tnote{*}}  & \multirow{2}{*}{$0.61727$}    & $0.95424$ $\left(+54.59\% \right)$ \\ 
			\bottomrule
		\end{tabularx}
		\begin{tablenotes}[para, flushleft]\footnotesize
\item[*] BKLC code from \cite{Grassl_codetables}
			\item[a] BCH code from \cite{Lin2004}
\item[b] BCH code from \cite{Teradaweb}
			\item[c] Reed-Muller code from \cite{Sugita96, Teradaweb}
			\item[d] Quadratic residue code from \cite{oeis}
		\end{tablenotes}
	\end{threeparttable}
\end{table}

\subsection{Construction of Corrector Sets}
\label{subsec:Construction of Corrector Sets}

We first construct two sets from which the optimal extracting correctors will be determined: the set of correctors with output min-entropy determined according to the old bound (OBC) and the set of correctors with output min-entropy calculated by the new bound from Theorem~\ref{thm:main_thm} (NBC).
The OBC is a set of $32,741$ elements and consists of the correctors based on the non-trivial ($n \neq k$) BKLCs from \cite{Grassl_codetables}, BCH codes up to length 511 from \cite{Lin2004} and binary linear codes available at \cite{oeis, Truong05, Tomlinson17, Teradaweb, Schomaker92, Desaki97, Sugita96, Fujiwara93}.
On the other hand, the NBC set has a total of $16,613$ elements. 
It comprises correctors that are derived from binary linear codes with known weight distributions. These weight distributions are obtained from various sources, namely \cite{oeis, Truong05, Tomlinson17, Teradaweb, Schomaker92, Desaki97, Sugita96, Fujiwara93}. 
Additionally, the NBC set includes all non-trivial BKLCs and BCH codes found in the OBC. 
The length of these codes is restricted to $n < 81$, except for those with $n \geq 81$ that satisfy the condition $\mathrm{min}\left(k,\, n-k\right) \leq 38$.
Computing the weight distributions using MAGMA \cite{MAGMA} of BKLCs and BCH codes under these restrictions requires at most 60s per code of the real CPU time on Intel(R) Xeon(R) Gold 6248R CPU @ 3.00GHz with 24 cores and 48 threads.
To handle the BKLCs with generator matrices that contain one or more all-zero columns, we used codes with equivalent minimum distances but modified generator matrices to ensure that each column had at least one non-zero entry.

For hardware implementations of the correctors, opting for cyclic codes generally results in smaller area requirements. 
This is because they can be implemented with only several registers and XOR gates, utilizing the well-known generator or parity-check polynomial constructions \cite{Kwok2011, Lacharme2008}.
To also provide optimal extracting correctors based only on cyclic codes, we form two new sets out of OBC and NBC, consisting only of cyclic constructions -- OBCCYC and NBCCYC.
The OBCCYC and NBCCYC sets consist out of 803 and 637 correctors, respectively.
Comprehensive lists of elements in all four sets, accompanied by corresponding weight distributions for the NBC and NBCCYC sets, are publicly available via our Github repository \cite{OurGithub}.

Once the design parameter $\mathrm{H}^{out,\, 1}_\infty$ has been set, we calculate the code rate $\nicefrac{k}{n}$ and $\mathrm{H}^{in,\, req}_\infty$ according to (\ref{eq:Lacharme_ineq_mod}) for each corrector in the OBC and OBCCYC sets such that $\mathrm{H}^{out,\, tot}_\infty = k - 1 + \mathrm{H}^{out,\, 1}_\infty$ is reached.
Likewise, by numerically solving (\ref{eq:Hmin_pp}) via bisection for the same $\mathrm{H}^{out,\, tot}_\infty$, we obtain $\mathrm{H}^{in,\, req}_\infty$ and the code rate for every corrector in the NBC and NBCCYC sets.
If $\mathrm{H}^{in,\, req}_\infty$ is smaller than $\mathrm{H}^{out,\, 1}_\infty$, the corrector can be used for increasing the min-entropy rate and is referred to as an \textit{appropriate corrector}.
We form the subsets of appropriate correctors from each of the four corrector sets.
Finally, we construct sets of \textit{optimal extracting} correctors from sets of appropriate correctors, which we also call Pareto frontier (PF) correctors. 
It is important to note that, due to the disparity between the new and old bound, the optimal extracting correctors within the NBC/NBCCYC sets generally do not correspond to the optimal extracting correctors within the OBC/OBCYC sets.

\begin{figure*}[t]
	\centering
	\subfloat[\hspace*{-4em} \label{fig:a_CYC_Optimal_selection_fig_0.999}]{\includegraphics{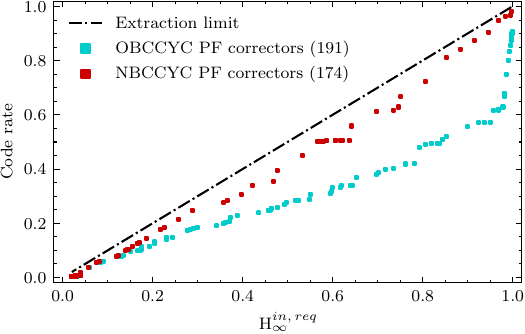}}
	\hfill
	\subfloat[\hspace*{-4em} \label{fig:b_CYC_Optimal_selection_fig_0.999}]{\includegraphics{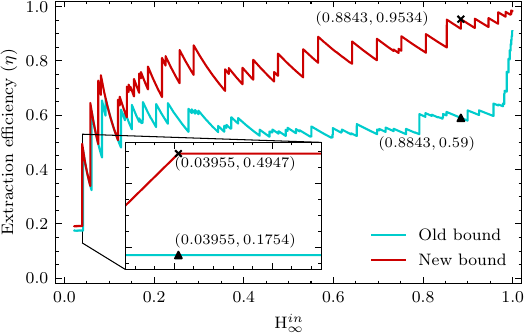}}
	\caption{Performances of optimal linear correctors from OBCCYC and NBCCYC for $\mathrm{H}^{out,\, 1}_\infty \geq 0.999$ and extraction efficiency according to the old and the new bound.} 
	\label{fig:CYC_Optimal_selection_fig_0.999}
\end{figure*}

\subsection{Practical Corrector Selection and Efficiency Comparisons}
\label{subsec:results}

In this work, we use $\mathrm{H}^{out,\, 1}_{\infty} = 0.999$, as it is the maximum between the requirement of the latest version of AIS-31 \cite{AIS31New} (0.98) and  NIST SP 800-90B \cite{Turan2018} upper bound for the min-entropy rate after non-cryptographic post-processing (0.999).
With this setting, we identified $24,221$ appropriate correctors from the OBC set, $15,873$ from the NBC set, $522$ from the OBCCYC set and $435$ from the NBCCYC set.

We evaluated the improvement in lowering $\mathrm{H}^{in,\, req}_\infty$ offered by the new bound by calculating the difference between the $\mathrm{H}^{in,\, req}_\infty$ values for $\mathrm{H}^{out,\, 1}_{\infty} = 0.999$ according to the new and the old bound for $9,908$ appropriate correctors common to both the OBC and NBC sets.
Our analysis revealed that the new bound yields a considerable relative improvement in $\mathrm{H}^{in,\, req}_\infty$ surpassing $15\, \%$ for most constructions. 
We found that the greatest absolute improvement is achieved for the Reed-Muller $\left[256, 93, 32\right]$ code-based corrector, for which the new bound lowers $\mathrm{H}^{in,\, req}_\infty$ from 0.854296 to 0.407964, while the largest relative improvement of $61.62 \,\%$ is obtained for the Reed-Muller $\left[512, 130, 64\right]$ corrector, as indicated in \figurename~\ref{fig:Max_entropy_diff}.
It is worthwhile to note that the old bound fails to guarantee that every output bit will have at least some entropy for $\mathrm{H}^{in}_\infty = 0.274447$ in the case of $\left[512, 130, 64\right]$ corrector and $\mathrm{H}^{in}_\infty = 0.407964$ in the case of $\left[256, 93, 32\right]$ corrector, by taking a conservative approach to calculating the output min-entropy rate $\mathrm{H}^{out,\, 1}_\infty = \mathrm{max} \left( \mathrm{H}^{out,\, tot}_\infty - k + 1,\, 0 \right)$. 
Even for correctors based on codes with very large minimum distances, such as the $\left[512, 10, 256\right]$ corrector, our bound still offers a discernible improvement of $0.01 \,\%$. 
This indicates that the state-of-the-art min-entropy bound for these correctors is already quite close to the new bound, underscoring that further improvements for the same $\mathrm{H}^{out,\, 1}_{\infty}$ are not feasible.

Appropriate correctors from OBC and NBC sets in a \textit{code rate - required input min-entropy} plane are shown in 
\figurename~\ref{fig:a_Optimal_selection_fig_0.999} -- \ref{fig:c_Optimal_selection_fig_0.999}. The dash-dotted lines show the theoretical extraction limit for $\mathrm{H}^{out,\, 1}_{\infty} = 0.999$, i.e., the highest possible code rate of $\nicefrac{\mathrm{H}^{in}_\infty}{\mathrm{H}^{out,\, 1}_{\infty}}$ for $\mathrm{H}^{in}_\infty < \mathrm{H}^{out,\, 1}_{\infty}$.
\figurename~\ref{fig:c_Optimal_selection_fig_0.999} displays optimal extracting (PF) correctors from both sets to examine the benefits of the new bound.
Although the NBC set of appropriate correctors is much smaller than its OBC counterpart, the optimal extracting solutions obtained by our bound always dominate over the solutions with the old bound.
Further, the new bound provides more optimal extracting correctors than the old one, though the correctors from the OBC set are more evenly spread.
Our analysis also revealed that the OBC set's optimal extracting correctors have the smallest $\mathrm{H}^{in,\, req}_{\infty}$ value of $0.022275$, whereas the NBC set's optimal extracting correctors have the smallest $\mathrm{H}^{in,\, req}_{\infty}$ value of $0.020351$. 
These results suggest that the new bound permits a marginally broader range of admissible raw bit min-entropies.

\figurename~\ref{fig:d_Optimal_selection_fig_0.999} shows the extraction efficiency for targeted $\mathrm{H}^{in}_\infty$ in the common range for both bounds -- $\left(0.022275, 0.999\right)$, by using the optimal extracting correctors selected according to the state-of-the-art and the new bound from the OBC and NBC sets, respectively.
The extraction efficiency is calculated using (\ref{eq:extraction_efficiency}).
For the old bound, we obtained $\mathrm{H}^{out,\, tot}_{\infty}$ as described in (\ref{eq:Lacharme_ineq_mod}), while for the new bound, we utilized (\ref{eq:Hmin_pp}). 
As indicated by peaks in the graph, extraction efficiency reaches local maxima for targeted min-entropies that coincide with $\mathrm{H}^{in,\, req}_\infty$ of the optimal extracting correctors.
Here, we observe that the extraction efficiencies for both bounds are consistently greater than $0.5$ starting from $\mathrm{H}^{in}_\infty = 0.08374$ and that the new bound extraction efficiency outperforms the old bound one for the entire input min-entropy range.
The largest absolute efficiency difference of $0.39668$ is reached for $\mathrm{H}^{in}_\infty = 0.76697$, while the highest relative efficiency increase of $130.56\, \%$ is achieved for $\mathrm{H}^{in}_\infty = 0.03947049$.
Additionally, we computed the average relative efficiency increase resulting from the new bound to be $41.2\, \%$, while starting from $\mathrm{H}^{in}_\infty = 0.1777221$ this relative increase consistently exceeds $20\, \%$.
The performances of optimal correctors from both sets for nine targeted input min-entropies are summarized in Table~\ref{tab:PF_codes_0.999}, together with constructions of corresponding correctors.

\begin{table}[t!]
	\caption{Optimal Linear Correctors Based on Cyclic Codes and Performances for $\mathrm{H}^{out,\, 1}_{\infty} \geq 0.999$}
	\label{tab:PF_codes_0.999_cyclic}
	\setlength\tabcolsep{2pt}
	\setlength\extrarowheight{2pt} 
	\begin{threeparttable}
		\begin{tabularx}{\linewidth}{@{\extracolsep{4pt}} *{5}{C} @{}}
			\toprule
			\multirow{3}{*}{Target $\mathrm{H}^{in}_{\infty}$} 
			& \multicolumn{2}{c}{Corrector construction} & \multicolumn{2}{c}{Extraction efficiency ($\eta$)} \\ 
			\cline{2-3} \cline{4-5} & OBCCYC & NBCCYC & Old bound & New bound\\
			\midrule
			\multirow{2}{*}{$0.1$}   & \multirow{2}{*}{$\left[511, 31, 219\right]$\tnote{a}}    & \multirow{2}{*}{$\left[511, 31, 219\right]$\tnote{a}}  & \multirow{2}{*}{$0.60665234$}    & $0.60665360$ $\left(+0.0002\% \right)$ \\ 
            \multirow{2}{*}{$0.2$}   & \multirow{2}{*}{$\left[255, 29, 95\right]$\tnote{a,b}}    & \multirow{2}{*}{$\left[255, 37, 91\right]$\tnote{a,b}}  & \multirow{2}{*}{$0.56862$}    & $0.72549$ $\left(+27.59\% \right)$ \\ 
            \multirow{2}{*}{$0.3$}   & \multirow{2}{*}{$\left[255, 47, 85\right]$\tnote{a,b}}    & \multirow{2}{*}{$\left[255, 63, 63\right]$\tnote{a,b}}  & \multirow{2}{*}{$0.61437$}    & $0.82353$ $\left(+34.04\% \right)$ \\ 
            \multirow{2}{*}{$0.4$}   & \multirow{2}{*}{$\left[127, 29, 43\right]$\tnote{*}}    & \multirow{2}{*}{$\left[117, 36, 32\right]$\tnote{*, c}}  & \multirow{2}{*}{$0.57086$}    & $0.76921$ $\left(+34.75\% \right)$ \\ 
            \multirow{2}{*}{$0.5$}   & \multirow{2}{*}{$\left[127, 35, 36\right]$\tnote{*}}    & \multirow{2}{*}{$\left[127, 50, 27\right]$\tnote{a,b}}  & \multirow{2}{*}{$0.55117$}    & $0.78740$ $\left(+42.86\% \right)$ \\ 
            \multirow{2}{*}{$0.6$}   & \multirow{2}{*}{$\left[127, 42, 32\right]$\tnote{*}}    & \multirow{2}{*}{$\left[127, 64, 19\right]$\tnote{d}}  & \multirow{2}{*}{$0.55117$}    & $0.83989$ $\left(+52.38\% \right)$ \\ 
            \multirow{2}{*}{$0.7$}   & \multirow{2}{*}{$\left[55, 21, 15\right]$\tnote{*}}    & \multirow{2}{*}{$\left[127, 78, 15\right]$\tnote{a,b}}  & \multirow{2}{*}{$0.54543$}    & $0.87738$ $\left(+60.86\% \right)$ \\ 
            \multirow{2}{*}{$0.8$}   & \multirow{2}{*}{$\left[23, 11, 8\right]$\tnote{*}}    & \multirow{2}{*}{$\left[127, 85, 13\right]$\tnote{a,b}}  & \multirow{2}{*}{$0.59779$}    & $0.83661$ $\left(+39.95\% \right)$ \\ 
            \multirow{2}{*}{$0.9$}   & \multirow{2}{*}{$\left[63, 35, 12\right]$\tnote{*}}    & \multirow{2}{*}{$\left[255, 215, 11\right]$\tnote{a,b}}  & \multirow{2}{*}{$0.61727$}    & $0.93682$ $\left(+51.77\% \right)$ \\ 
			\bottomrule
		\end{tabularx}
		\begin{tablenotes}[para, flushleft]\footnotesize
\item[*] BKLC code from \cite{Grassl_codetables}
            \item[a] BCH code from \cite{Lin2004}
\item[b] BCH code from \cite{Teradaweb}
            \item[c] Code from \cite{Schomaker92}
            \item[d] Quadratic residue code from \cite{oeis}
		\end{tablenotes}
	\end{threeparttable}
\end{table}

The code rates of the optimal extracting correctors based only on cyclic codes from OBCCYC and NBCCYC sets versus their $\mathrm{H}^{in,\, req}_{\infty}$ is depicted in \figurename~\ref{fig:a_CYC_Optimal_selection_fig_0.999}.
In this case, there are fewer optimal correctors from the NBCCYC set, but we found that the new bound still provides a narrowly larger range of admissible input min-entropies, as the values of the smallest $\mathrm{H}^{in,\, req}_{\infty}$ for correctors in OBCCYC and NBCCYC sets are identical to the ones in OBC and NBC sets, respectively.
Based on the results shown in the plot of \figurename~\ref{fig:b_CYC_Optimal_selection_fig_0.999}, which displays the relationship between the extraction efficiency and the targeted $\mathrm{H}^{in}_{\infty}$, it is evident that the extraction efficiency achieved with the new bound-selected cyclic correctors always surpasses that of the old bound-selected cyclic correctors for all targeted $\mathrm{H}^{in}_{\infty}$.
Notably, the maximum relative efficiency increase of $182.04\, \%$ achieved for $\mathrm{H}^{in}_{\infty} = 0.03955041$ in this case is higher than the increase observed without imposing the cyclicity restriction.
Table~\ref{tab:PF_codes_0.999_cyclic} summarizes the performances and constructions of optimal correctors based on cyclic codes for nine targeted input min-entropies.
Optimal extracting corrector constructions from all sets and their $\mathrm{H}^{in,\, req}_{\infty}$ are available in our online repository \cite{OurGithub}.

\subsection{Implementation Cost Criterion}

\begin{figure}[!t]
	\centering
\includegraphics[scale=1]{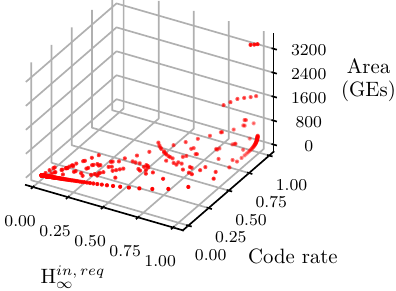}
	\caption{Optimal area-efficient cyclic code-based correctors.}
	\label{fig:Implementation_cost}
\end{figure}

As a final selection criterion, we take an estimation of the implementation cost (chip area) of the correctors based on cyclic codes.
Cyclic codes possess a distinct structure that results in a simplified implementation compared to general codes. 
Our objective is to find a balance between the code rate, required input min-entropy, and the area the corrector based on cyclic code would occupy. 
In doing so, we ensure that the chosen correctors not only provide a small reduction in throughput and high extraction efficiency but are also practical for real-world applications.

To evaluate the implementation cost of each corrector in the NBCCYC set, without  including a controller counter, we estimate the number of gate equivalents (GEs).
The area of each corrector is assessed based on two distinct implementation methods, utilizing the generator and parity-check polynomials of the corresponding code, as delineated in \cite{Kwok2011}. 
We employ XOR2\_1 and DFFR\_X1 gates from the NanGate 45 nm open standard-cell library \cite{NanGateRef}.
Each XOR2\_1 gate consumes 2 GEs, while the DFFR\_X1 gate utilizes 6.67 GEs. 
Here, one GE corresponds to the size of a NAND2\_X1 gate.
We first calculate the area of each corrector using both implementation flavors. Subsequently, for each individual corrector, we select the implementation yielding the smaller area. 
We then conduct a three-dimensional optimization to derive a set of optimal area efficiency correctors.
A corrector based on cyclic code is \textit{optimal area-efficient} if there are no other codes in NBCCYC that concurrently exhibit a higher code rate, equal or lower $\mathrm{H}^{in,\, req}_\infty$, and a smaller area.

The 434 optimal area-efficient correctors that we found are displayed in \figurename~\ref{fig:Implementation_cost}. 
Table~\ref{tab:PF_3D_codes_0.999_cyclic} provides an overview of the constructions and performances of these correctors for nine targeted input min-entropies. 
Comparing these correctors to the correctors found with the new bound listed in Table~\ref{tab:PF_codes_0.999_cyclic}, it is immediately evident that the correctors in Table~\ref{tab:PF_3D_codes_0.999_cyclic} exhibit significantly lower extraction efficiency, particularly for $\mathrm{H}^{in,\, req}_\infty = 0.1$ and $\mathrm{H}^{in,\, req}_\infty = 0.4$. 
However, these constructions require only 8.67 GEs, whereas correctors based on $\left[511, 31, 219\right]$ and $\left[117, 36, 32\right]$ codes require 224.77 GEs and 272.12 GEs, respectively.
On the other hand, for $\mathrm{H}^{in,\, req}_\infty = 0.5$, the efficiency of the $\left[87, 31, 22\right]$ corrector differs from that of the $\left[127, 50, 27\right]$ corrector by only 0.0748, while consuming much less area: 375.50 GEs vs 244.77 GEs. 
The estimated implementation costs for all correctors from the NBCCYC set are also available in \cite{OurGithub}.

\begin{table}[t!]
	\caption{Optimal Area-Efficient Linear Correctors Based on Cyclic Codes and Performances for $\mathrm{H}^{out,\, 1}_{\infty} \geq 0.999$ (New Bound)}
	\label{tab:PF_3D_codes_0.999_cyclic}
	\setlength\tabcolsep{2pt}
	\setlength\extrarowheight{2pt} 
	\begin{threeparttable}
		\begin{tabularx}{\linewidth}{@{\extracolsep{2pt}} *{4}{C} @{}}
			\toprule
\multirow{2}{*}{Target $\mathrm{H}^{in}_{\infty}$} & Corrector construction & Extraction efficiency ($\eta$) & {Area \hspace{10pt} (NanGate 45 nm)} \\
			\midrule
			$0.1$   & $\left[51, 1, 51\right]$\tnote{}    & $0.1959$ & $8.67$ GEs \\ 
			$0.2$   & $\left[127, 15, 55\right]$\tnote{}    & $0.5905$ & $122.05$ GEs \\ 
			$0.3$   & $\left[63, 9, 28\right]$\tnote{}    & $0.4762$ & $68.03$ GEs \\ 
			$0.4$   & $\left[11, 1, 11\right]$\tnote{}    & $0.2270$ & $8.67$ GEs \\ 
			$0.5$   & $\left[87, 31, 22\right]$\tnote{}    & $0.7126$ & $244.77$ GEs \\ 
			$0.6$   & $\left[127, 64, 21\right]$\tnote{}    & $0.8399$ & $484.88$ GEs \\ 
			$0.7$   & $\left[15, 5, 7\right]$\tnote{}    & $0.4761$ & $39.35$ GEs \\ 
			$0.8$   & $\left[23, 12, 7\right]$\tnote{}    & $0.6521$ & $94.04$ GEs \\ 
			$0.9$   & $\left[31, 21, 5\right]$\tnote{}    & $0.7527$ & $162.07$ GEs \\ 
			\bottomrule
		\end{tabularx}
\end{threeparttable}
\end{table}

\begin{table}[t!]
	\caption{Implementation Cost Comparisons for Different Post-processing Algorithms}
	\label{tab:Implementation_comparisons}
	\setlength\tabcolsep{2pt}
	\setlength\extrarowheight{2pt} 
	\begin{threeparttable}
		\begin{tabularx}{\linewidth}{@{\extracolsep{2pt}} *{4}{C} @{}}
			\toprule
			Post-processing & Reference & Technology & Area \\
			\midrule
			Keccak-$f$ [1600]   & \cite{Knichel2022}\tnote{a}    & NanGate 45 nm & $31,361$ GEs \\ 
			Keccak-$f$ [1600]   & \cite{Bilgin2014}\tnote{a}    & NanGate 45 nm & $28,100$ GEs \\ 
			SHA-256   & \cite{Baldanzi2020}\tnote{\hspace{3pt}}    & NanGate 45 nm & $15,000$ GEs \\ 
			Keccak-$f$ [1600]   & \cite{Bilgin2014}\tnote{b}    & NanGate 45 nm & $12,800$ GEs \\ 
			SHA-256   & \cite{Kim2009}\tnote{\hspace{3pt}}  &STD110  0.25$\,\mu m$& $8,588$ GEs \\ 
			Keccak-$f$ [1600]   & \cite{Pessl2013}\tnote{b}    & UMC 0.13 $\mu m$ & $5,522$ GEs \\ 
			Linear corrector $\left[511, 484, 7\right]$  & \multirow{2}{*}{This work\tnote{c}}    & \multirow{2}{*}{NanGate 45 nm} & \multirow{2}{*}{$3443.04$ GEs} \\  
			\bottomrule
		\end{tabularx}
		\begin{tablenotes}[para, flushleft]\footnotesize
			\item[a] round-based
			\item[b] serial (slice-based)
			\item[c] largest optimal area-efficient NBCCYC corrector
		\end{tablenotes}
	\end{threeparttable}
\end{table}

Table~\ref{tab:Implementation_comparisons} shows the area usage (in GEs) for the largest optimal area-efficient linear corrector $\left[511, 484, 7\right]$ and several implementations of two NIST-approved cryptographic hash functions that can be used for post-processing (conditioning) \cite{Turan2018} -- SHA-3 (based on Keccak-$f$ [1600]) and SHA-256.
It can be observed that the areas of various implementations of Keccak-$f$ [1600] and SHA-256 vary significantly due to the technology and architectural choices.
However, even the implementation of the largest linear corrector $\left[511, 484, 7\right]$ from our work demonstrates a remarkable reduction in the area footprint, consuming only $3443.04$ GEs. 
This represents a considerable saving in comparison to the cryptographic post-processing algorithms.
Further, when considering only implementations using identical technology -- Nangate 45 nm, the $\left[511, 484, 7\right]$ corrector is more than three times smaller than the most area-efficient implementation of Keccak-$f$ [1600].

 \section{Conclusion}
\label{Sec:Conclusions}

In this paper, we have presented a novel tight bound on the output min-entropy of linear correctors based on the weight distribution of the corresponding binary linear code. 
Our proposed bound, which relies on the code's weight distribution, enables more efficient use of linear correctors than the old bound, which only requires knowledge of the code's minimum distance. 
We have demonstrated how the new bound can be used to select an optimal extracting corrector that meets output min-entropy rate requirements and maximizes throughput.
Moreover, we have made publicly available optimal constructions for general correctors and correctors based on cyclic codes for $\mathrm{H}^{out,\, 1}_\infty = 0.999$, allowing for easy implementation and integration into existing TRNG designs.
Our findings indicate a potential for advancements in optimal extracting solutions through further research in characterizing binary linear codes' weight distributions. Future work will concentrate on constructing tight output min-entropy bounds for a wider spectrum of non-IID noise sources and, potentially, non-linear correctors.

\section*{Acknowledgment}

The authors would like to thank the anonymous reviewers for their useful feedback and highlighting the connection between our findings and those presented in the work by Redinbo \cite{Redinbo1973} using Fourier methods.

\bibliographystyle{IEEEtran}
\bibliography{IEEEabrv, correctors_paper_bibl}

% Generated by IEEEtran.bst, version: 1.14 (2015/08/26)
\begin{thebibliography}{10}
\providecommand{\url}[1]{#1}
\csname url@samestyle\endcsname
\providecommand{\newblock}{\relax}
\providecommand{\bibinfo}[2]{#2}
\providecommand{\BIBentrySTDinterwordspacing}{\spaceskip=0pt\relax}
\providecommand{\BIBentryALTinterwordstretchfactor}{4}
\providecommand{\BIBentryALTinterwordspacing}{\spaceskip=\fontdimen2\font plus
\BIBentryALTinterwordstretchfactor\fontdimen3\font minus
  \fontdimen4\font\relax}
\providecommand{\BIBforeignlanguage}[2]{{%
\expandafter\ifx\csname l@#1\endcsname\relax
\typeout{** WARNING: IEEEtran.bst: No hyphenation pattern has been}%
\typeout{** loaded for the language `#1'. Using the pattern for}%
\typeout{** the default language instead.}%
\else
\language=\csname l@#1\endcsname
\fi
#2}}
\providecommand{\BIBdecl}{\relax}
\BIBdecl

\bibitem{Turan2018}
M.~S. Turan, E.~Barker, J.~Kelsey, K.~McKay, M.~Baish, and M.~Boyle, ``{NIST}
  special publication {800-90B}: {Recommendation} for the entropy sources used
  for random bit generation,'' Tech. Rep., Nat. Inst. Standards Technol.,
  Gaithersburg, MD, USA, Jan. 2018.

\bibitem{AIS31}
W.~Killmann and W.~Schindler, ``A proposal for: {Functionality} classes for
  random number generators,'' \emph{ser. BDI, Bonn}, 2011.

\bibitem{AIS31New}
M.~Peter and W.~Schindler, ``A proposal for functionality classes for random
  number generators, version 2.35 -- draft,'' \emph{ser. BDI, Bonn}, 2022.

\bibitem{Balasch18}
J.~Balasch, F.~Bernard, V.~Fischer, M.~Gruji\'{c}, M.~Laban, O.~Petura,
  V.~Ro\v{z}i\'{c}, G.~van Battum, I.~Verbauwhede, M.~Wakker, and B.~Yang,
  ``Design and testing methodologies for true random number generators towards
  industry certification,'' in \emph{Proc. 2018 IEEE 23rd Eur. Test Symp.
  (ETS)}, 2018, pp. 1--10.

\bibitem{Yang2018}
B.~Yang, V.~Ro\v{z}i\'{c}, M.~Gruji\'{c}, N.~Mentens, and I.~Verbauwhede,
  ``\BIBforeignlanguage{en}{{ES-TRNG}: {A} {High}-throughput, {Low}-area {True}
  {Random} {Number} {Generator} based on {Edge} {Sampling}},''
  \emph{\BIBforeignlanguage{en}{IACR {Trans.} on Cryptograph. Hardw. Embed.
  Syst.}}, vol. 2018, no.~3, pp. 267--292, Aug. 2018.

\bibitem{Petura2016}
O.~Petura, U.~Mureddu, N.~Bochard, V.~Fischer, and L.~Bossuet, ``A survey of
  {AIS-20/31} compliant {TRNG} cores suitable for {FPGA} devices,'' in
  \emph{Proc. 26th Int. Conf. Field Program. Log. Appl. (FPL)}, Aug. 2016, pp.
  1--10.

\bibitem{Ma19}
Y.~Ma, T.~Chen, J.~Lin, J.~Yang, and J.~Jing, ``Entropy estimation for adc
  sampling-based true random number generators,'' \emph{IEEE {Trans.} {Inf.}
  Forensics Security}, vol.~14, no.~11, pp. 2887--2900, 2019.

\bibitem{Johnston18}
D.~Johnston, \emph{\BIBforeignlanguage{ENGL}{Random {Number}
  {Generators}—{Principles} and {Practices}, {A} {Guide} for {Engineers} and
  {Programmers}}}.\hskip 1em plus 0.5em minus 0.4em\relax Berlin, Boston: De
  Gruyter, Sep. 2018.

\bibitem{VonNeumann1951}
J.~Von~Neumann, ``Various techniques used in connection with random digits,''
  \emph{Appl. Math. Ser.}, vol.~12, pp. 36--38, 1951.

\bibitem{Peres1992}
Y.~Peres, ``Iterating von {Neumann}'s procedure for extracting random bits,''
  \emph{Ann. Statist.}, pp. 590--597, 1992.

\bibitem{Elias1972}
P.~Elias, ``The efficient construction of an unbiased random sequence,''
  \emph{Ann. Math. Statist.}, pp. 865--870, 1972.

\bibitem{Davies2002}
R.~B. Davies, ``Exclusive or (xor) and hardware random number generators,''
  Author-hosted manuscript at \url{http://www.robertnz.net/pdf/xor2.pdf}, 2002.

\bibitem{Dichtl2007}
M.~Dichtl, ``Bad and good ways of post-processing biased physical random
  numbers,'' in \emph{Proc. Int. Workshop Fast Softw. Encryption}, 2007, pp.
  137--152.

\bibitem{Lacharme2008}
P.~Lacharme, ``Post-processing functions for a biased physical random number
  generator,'' in \emph{Proc. Int. Workshop Fast Softw. Encryption}, 2008, pp.
  334--342.

\bibitem{Lacharme2009}
------, ``Analysis and construction of correctors,'' \emph{IEEE {Trans.} Inf.
  Theory}, vol.~55, no.~10, pp. 4742--4748, 2009.

\bibitem{Tomasi2017}
A.~Tomasi, A.~Meneghetti, and M.~Sala, ``\BIBforeignlanguage{en}{Code generator
  matrices as {RNG} conditioners},'' \emph{\BIBforeignlanguage{en}{Finite
  Fields Appl.}}, vol.~47, pp. 46--63, Sep. 2017.

\bibitem{Grujic2022TROT}
M.~Gruji\'{c} and I.~Verbauwhede, ``{TROT}: A three-edge ring oscillator based
  true random number generator with time-to-digital conversion,'' \emph{IEEE
  Trans. Circuits Syst. I}, vol.~69, no.~6, pp. 2435--2448, 2022.

\bibitem{Zeh2021}
A.~Zeh, A.~Glew, B.~Spinney, B.~Marshall, D.~Page, D.~Atkins, K.~Dockser,
  M.-J.~O. Saarinen, N.~Menhorn, and R.~Newell, ``{RISC}-{V} cryptographic
  extension proposals,'' Online available at:
  \url{https://github.com/riscv/riscv-crypto}, 2021.

\bibitem{Saarinen2022}
M.-J.~O. Saarinen, G.~R. Newell, and B.~Marshall, ``Development of the
  {RISC}-{V} entropy source interface,'' \emph{J. Cryptograph. Eng.}, vol.~12,
  no.~4, pp. 371--386, Jan. 2022.

\bibitem{Ugajin17}
K.~Ugajin, Y.~Terashima, K.~Iwakawa, A.~Uchida, T.~Harayama, K.~Yoshimura, and
  M.~Inubushi, ``{Real-time fast physical random number generator with a
  photonic integrated circuit},'' \emph{Opt. Express}, vol.~25, no.~6, pp.
  6511--6523, Mar 2017.

\bibitem{Ali19}
R.~Ali, Y.~Wang, Z.~Hou, H.~Ma, Y.~Zhang, and W.~Zhao, ``Process
  variation-resilient {STT-MTJ} based {TRNG} using linear correcting codes,''
  in \emph{Proc. 2019 IEEE/ACM Int. Symp. Nanoscale Architectures (NANOARCH)},
  2019, pp. 1--6.

\bibitem{Park20}
J.~Park, S.~Cho, T.~Lim, and M.~Tehranipoor, ``{QEC}: A quantum entropy chip
  and its applications,'' \emph{IEEE Trans. Very Large Scale Integr. (VLSI)
  Syst}, vol.~28, no.~6, pp. 1471--1484, 2020.

\bibitem{Lyp21}
T.~Lyp, N.~Karimian, and F.~Tehranipoor, ``{LISH}: {A} new random number
  generator using {ECG} noises,'' in \emph{Proc. 2021 IEEE Int. Conf. Consum.
  Electron. (ICCE)}, 2021, pp. 1--6.

\bibitem{Massari2022}
N.~Massari, A.~Tontini, L.~Parmesan, M.~Perenzoni, M.~Gruij\'{c},
  I.~Verbauwhede, T.~Strohm, D.~Oshinubi, I.~Herrmann, and A.~Brenneis, ``{A
  monolithic SPAD-based random number generator for cryptographic
  application},'' in \emph{Proc. IEEE 48th Eur. Solid State Circuits Conf.
  (ESSCIRC 2022)}, 2022, pp. 73--76.

\bibitem{Kwok2011}
S.-H. Kwok, Y.-L. Ee, G.~Chew, K.~Zheng, K.~Khoo, and C.-H. Tan, ``A comparison
  of post-processing techniques for biased random number generators,'' in
  \emph{Proc. {IFIP} {Int.} {Workshop} {Inf.} {Security} {Theory}
  {Practices}}.\hskip 1em plus 0.5em minus 0.4em\relax Springer, 2011, pp.
  175--190.

\bibitem{OurGithub}
M.~Gruji\'{c} and I.~Verbauwhede, ``Optimal linear correctors - repository,''
  \url{https://github.com/KULeuven-COSIC/Optimizing-Linear-Correctors/}, 2023.

\bibitem{MacWilliams1977}
F.~J. MacWilliams and N.~J.~A. Sloane, \emph{The theory of error correcting
  codes}.\hskip 1em plus 0.5em minus 0.4em\relax Elsevier, 1977, vol.~16.

\bibitem{Lin2004}
S.~Lin and D.~J. Costello, \emph{\BIBforeignlanguage{en}{{Error Control Coding:
  Fundamentals and Applications}}}.\hskip 1em plus 0.5em minus 0.4em\relax
  Pearson-Prentice Hall, 2004.

\bibitem{Macwilliams1963theorem}
J.~MacWilliams, ``A theorem on the distribution of weights in a systematic
  code,'' \emph{Bell Syst. Tech. J.}, vol.~42, no.~1, pp. 79--94, 1963.

\bibitem{Zhou2011}
H.~Zhou and J.~Bruck, ``Linear extractors for extracting randomness from noisy
  sources,'' in \emph{Proc. 2011 {IEEE} {Int.} {Symp.} {Inf.} {Theory}}, Jul.
  2011, pp. 1738--1742.

\bibitem{Zhou2012}
------, ``Linear transformations for randomness extraction,'' \emph{arXiv
  preprint arXiv:1209.0732}, 2012.

\bibitem{Meneghetti2014}
A.~Meneghetti, M.~Sala, and A.~Tomasi, ``A weight-distribution bound for
  entropy extractors using linear binary codes,'' \emph{arXiv preprint
  arXiv:1405.2820}, 2014.

\bibitem{Sason2013}
I.~Sason, ``Entropy {Bounds} for {Discrete} {Random} {Variables} via {Maximal}
  {Coupling},'' \emph{IEEE {Trans.} Inf. Theory}, vol.~59, no.~11, pp.
  7118--7131, Nov. 2013.

\bibitem{Sullivan1967}
D.~Sullivan, ``A fundamental inequality between the probabilities of binary
  subgroups and cosets,'' \emph{IEEE {Trans.} Inf. Theory}, vol.~13, no.~1, pp.
  91--94, Jan. 1967.

\bibitem{Zivkovic1991}
M.~\v{Z}ivkovi\'{c}, ``On two probabilistic decoding algorithms for binary
  linear codes,'' \emph{IEEE {Trans.} Inf. Theory}, vol.~37, no.~6, pp.
  1707--1716, Nov. 1991.

\bibitem{Redinbo1973}
G.~Redinbo, ``Inequalities between the probability of a subspace and the
  probabilities of its cosets,'' \emph{{IEEE Trans. Inf. Theory}}, vol.~19,
  no.~4, pp. 533--536, Jul. 1973.

\bibitem{ThesisMeneghetti2017}
A.~Meneghetti, ``{Optimal Codes and Entropy Extractors},'' Ph.D. dissertation,
  {Universit{\`a} degli studi di Trento}, 2017.

\bibitem{MAGMA}
W.~Bosma, J.~Cannon, and C.~Playoust, ``The {M}agma algebra system. {I}. {T}he
  user language,'' \emph{J. Symbolic Comput.}, vol.~24, no. 3-4, pp. 235--265,
  1997.

\bibitem{Grassl_codetables}
M.~Grassl, ``{Bounds on the minimum distance of linear codes and quantum
  codes},'' Online available at: \url{http://www.codetables.de}, 2007.

\bibitem{Berlekamp78}
E.~Berlekamp, R.~McEliece, and H.~van Tilborg, ``On the inherent intractability
  of certain coding problems (corresp.),'' \emph{IEEE {Trans.} Inf. Theory},
  vol.~24, no.~3, pp. 384--386, 1978.

\bibitem{Teradaweb}
M.~Terada, J.~Asatani, and T.~Koumoto, ``{Weight Distribution},'' Online
  available at: \url{https://isec.ec.okayama-u.ac.jp/home/kusaka/wd/}.

\bibitem{Sugita96}
T.~Sugita, T.~Kasami, and T.~Fujiwara, ``The weight distribution of the
  third-order {Reed}-{Muller} code of length 512,'' \emph{IEEE {Trans.} Inf.
  Theory}, vol.~42, no.~5, pp. 1622--1625, Sep. 1996.

\bibitem{oeis}
N.~J. Sloane, ``List of weight distributions in the on-line encyclopedia of
  integer sequences,'' Online available at:
  \url{https://oeis.org/wiki/List_of_weight_distributions}.

\bibitem{Truong05}
T.-K. Truong, Y.~Chang, and C.-D. Lee, ``{The weight distributions of some
  binary quadratic residue codes},'' \emph{IEEE {Trans.} Inf. Theory}, vol.~51,
  no.~5, pp. 1776--1782, 2005.

\bibitem{Tomlinson17}
M.~Tomlinson, C.~J. Tjhai, M.~A. Ambroze, M.~Ahmed, and M.~Jibril,
  \emph{Error-Correction Coding and Decoding: Bounds, Codes, Decoders, Analysis
  and Applications}.\hskip 1em plus 0.5em minus 0.4em\relax Springer Nature,
  2017.

\bibitem{Schomaker92}
D.~Schomaker and M.~Wirtz, ``On binary cyclic codes of odd lengths from 101 to
  127,'' \emph{IEEE {Trans.} Inf. Theory}, vol.~38, no.~2, pp. 516--518, 1992.

\bibitem{Desaki97}
Y.~Desaki, T.~Fujiwara, and T.~Kasami, ``{The weight distributions of extended
  binary primitive BCH codes of length 128},'' \emph{IEEE {Trans.} Inf.
  Theory}, vol.~43, no.~4, pp. 1364--1371, 1997.

\bibitem{Fujiwara93}
T.~Fujiwara and T.~Kasami, ``The weight distribution of (256, k) extended
  binary primitive bch code with k$<=$ 63, k$>=$ 207,'' IEICE, IT97, Tech.
  Rep., 1993.

\bibitem{NanGateRef}
\BIBentryALTinterwordspacing
Silvaco, ``Nangate 45 nm open cell library.'' [Online]. Available:
  \url{https://si2.org/open-cell-library/}
\BIBentrySTDinterwordspacing

\bibitem{Knichel2022}
D.~Knichel and A.~Moradi, ``\BIBforeignlanguage{en}{Composable gadgets with
  reused fresh masks: First-order probing-secure hardware circuits with only 6
  fresh masks},'' \emph{\BIBforeignlanguage{en}{IACR Trans. Cryptograph. Hardw.
  Embedded Syst.}}, pp. 114--140, Jun. 2022.

\bibitem{Bilgin2014}
B.~Bilgin, J.~Daemen, V.~Nikov, S.~Nikova, V.~Rijmen, and G.~Van~Assche,
  ``Efficient and first-order dpa resistant implementations of {Keccak},'' in
  \emph{Proc. 12th Int. Conf. Smart Card Res. Adv. Appl. (CARDIS)}, 2014, pp.
  187--199.

\bibitem{Baldanzi2020}
L.~Baldanzi, L.~Crocetti, F.~Falaschi, M.~Bertolucci, J.~Belli, L.~Fanucci, and
  S.~Saponara, ``Cryptographically secure pseudo-random number generator
  {IP}-core based on {SHA2} algorithm,'' \emph{Sensors}, vol.~20, no.~7, p.
  1869, 2020.

\bibitem{Kim2009}
M.~Kim, J.~Ryou, and S.~Jun, ``Efficient hardware architecture of sha-256
  algorithm for trusted mobile computing,'' in \emph{Proc. Inf. Security
  Cryptol.}, 2009, pp. 240--252.

\bibitem{Pessl2013}
P.~Pessl and M.~Hutter, ``Pushing the limits of {SHA-3} hardware
  implementations to fit on {RFID},'' in \emph{Cryptograph. {Hardw.} {Embed.}
  {Syst.} -- {CHES} 2013}, 2013, pp. 126--141.

\end{thebibliography}

\end{document}